\journal{Expert Systems with Applications}
\newtheorem{theorem}{Theorem}
\newtheorem{definition}{Definition}
\newtheorem{example}{Example}
\DeclareMathOperator*{\argmax}{arg\,max}
\DeclareMathOperator*{\argmin}{arg\,min}
\begin{document}

\begin{frontmatter}

\title{Differentially Private Random Decision Forests\\ using Smooth Sensitivity}

\author[mymainaddress]{Sam Fletcher\corref{mycorrespondingauthor}}
\cortext[mycorrespondingauthor]{Corresponding author}
\ead{sam.pt.fletcher@gmail.com}

\author[mymainaddress]{Md Zahidul Islam}
\ead{zislam@csu.edu.au}

\address[mymainaddress]{School of Computing and Mathematics, Charles Sturt University, Bathurst, Australia}

\begin{abstract}
We propose a new differentially-private decision forest algorithm that minimizes both the number of queries required, and the sensitivity of those queries. To do so, we build an ensemble of random decision trees that avoids querying the private data except to find the majority class label in the leaf nodes. Rather than using a count query to return the class counts like the current state-of-the-art, we use the Exponential Mechanism to only output the class label itself. This drastically reduces the sensitivity of the query -- often by several orders of magnitude -- which in turn reduces the amount of noise that must be added to preserve privacy. Our improved sensitivity is achieved by using ``smooth sensitivity'', which takes into account the specific data used in the query rather than assuming the worst-case scenario. We also extend work done on the optimal depth of random decision trees to handle continuous features, not just discrete features. This, along with several other improvements, allows us to create a differentially private decision forest with substantially higher predictive power than the current state-of-the-art.
\end{abstract}

\begin{keyword}
Privacy \sep Data Mining \sep Decision Tree \sep Decision Forest \sep Differential Privacy \sep Smooth Sensitivity
\end{keyword}

\end{frontmatter}

\section{Introduction}

Information about people is becoming increasingly valuable in the data-driven society of the 21st century. The ability to extract knowledge from data allows government and industry bodies to make informed decisions. This can be anything from monitoring traffic data and predicting congestion or intelligently controlling traffic lights, to reading large amounts of medical data and finding new disease patterns or predicting patient re-admission probabilities, to predicting financial market fluctuations. Many fields of science intersect when mining data; machine learning, statistics, and database systems are intertwined in order to produce useful, usable information \citep{Breiman2001a}. ``Data mining algorithms'' covers a wide range of possible algorithms; some produce a set of humanly-readable patterns discovered in the data, some produce a classification or regression model that enable predictions to be made about the future, and others detect anomalies in the data. In this paper, we focus on a system that takes large amounts of data as input, and outputs a classification model. This model then in turn takes new data as input, and outputs predictions (i.e., classifications) about how that data should be labeled. Fig.~\ref{fig:DM} presents a high-level view of this system.

Unfortunately, sometimes data models are not so easily built and knowledge is not so easily discovered. Sometimes, the privacy of the people being data mined needs to be taken into account. Whether it is the government collecting information about its citizens, a business collecting information about its present and future customers, or a health organization collecting information about illnesses or treatments, the privacy of the individuals being analyzed is a human right \citep{UNGeneralAssembly1948} that needs protecting. How exactly we go about protecting the privacy of individuals while also building models and discovering knowledge is a large question, and one that this paper weighs in on. We propose a data mining algorithm capable of classifying previously unseen data with high accuracy, built in a way that protects the privacy of every single person recorded in the training data.

Sometimes data is available to the public, such as through a government project like a census. Other data is collected and owned by businesses, where employees have access to it. In both cases, data subjects feel more comfortable if they know that personally-identifying information cannot be accessed by anyone without their explicit permission. In some cases the government may mandate privacy protections; in other cases, businesses may want to offer privacy protections to gain financial advantages, by encouraging more data subjects to provide their information. Differential privacy \citep{Dwork2006,Dwork2014} offers a way for both governments and businesses to guarantee privacy protections to every individual person who has information in a dataset. It is through the framework of differential privacy that our proposals are designed, taking advantage of the privacy guarantees it can make to each individual person in a dataset.

\begin{figure}[t]
	\centering
	\includegraphics[width=4.4in]{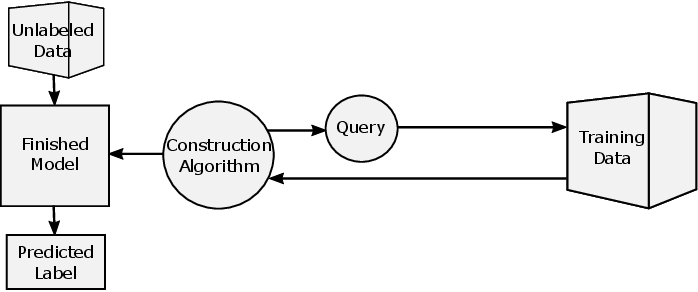}
	\caption{A high-level diagram of a data mining algorithm building a classification model from data.}
	\label{fig:DM}
\end{figure}

Fig.~\ref{fig:pipeline} presents a high-level view of the pipeline used by differential privacy; an algorithm (henceforth generalized to ``the user'') submits a query to the dataset, the dataset calculates the answer to the query, and then a differential privacy mechanism modifies the answer in a way that makes it mathematically impossible to detect if any specific individual is in the dataset or not. The differentially-private output is then returned to the user, and can be used in whatever calculations the user wishes at no further privacy cost. In this system (visualized in Fig.~\ref{fig:pipeline}), differential privacy does not need to be considered when using the final model; its job is done. Different implementations of differential privacy may add more or less noise to the answer, outputting answers that are closer or further from the true answer. We propose a new implementation of differential privacy that is designed for a specific data mining scenario: decision trees \citep{Breiman1984,Han2006a}. By taking advantage of what makes decision trees different from generic queries, we are able to produce an ensemble of decision trees (i.e., a decision forest) that can classify unseen data with substantially higher accuracy than the current state-of-the-art in differentially private decision trees and forests \citep{Friedman2010,Jagannathan2012,Jagannathan2013,Fletcher2015b,Fletcher2015c,Mohammed2015,Rana2016}.

Out of the possible classification algorithms \citep{Han2006a} that could be used to learn from data, decision trees are a good choice for minimizing the drawbacks of providing differential privacy. The main factors that dictate the impact of differential privacy on a classifier are: (a) the size of the privacy budget; (b) the number of queries that are required to build the classifier; and (c) the sensitivity of those queries to small changes in the data. (a) The size of the budget is outside the control of the user; all a classifier can aim to do is produce high-quality predictions with as small a budget as possible. Given a reasonably large dataset, we achieve high accuracy with a budget as small as $\epsilon=0.1$. (b) The number of queries dictates how much the budget needs to be divided up. Unless the queries are applied to disjoint subsets of the data, they compose and each costs a portion of the budget. This is something we can control, and our algorithm is built using the absolute minimum: one query, repeated across many disjoint subsets of the data, each able to use the entire privacy budget. (c) Finally, the impact of differential privacy is dictated by the sensitivity of the queries. Our work is the first to apply the concept of smooth sensitivity to decision trees. We demonstrate that our single query is often orders of magnitude less sensitive than the queries used by other differentially private decision trees, and equally sensitive in the worst-case scenario.

Additionally, the observations we make and the theorems we prove are not limited to decision trees, but can be used in any differentially-private scenario that aims to output discrete answers to queries about the most or least frequent item in a set.

\begin{figure}[t]
	\centering
	\includegraphics[width=4.4in]{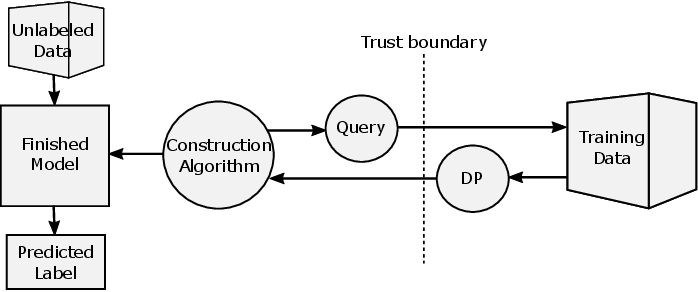}
	\caption{A high-level diagram of a data mining algorithm interfacing with private data, using differential privacy (DP). An untrusted user submits queries to a private data source and receives differentially-private answers.}
	\label{fig:pipeline}
\end{figure}

\subsection{Problem Statement} \label{subsec:problem}

A data owner possesses a two-dimensional dataset $x$ of $n$ records (i.e., tuples, rows) and $m$ features (i.e., attributes, columns). We assume that the feature schema is publicly available, including the domain of each of the features. The data owner provides an untrusted user with limited access to the data, giving the user a specific ``privacy budget'' $\epsilon$ to work with. Each time the user queries the dataset with some function $f(x)$, an amount of the privacy budget is spent depending on how invasive the query was to the privacy of the individuals described in the dataset. Once all the privacy budget is spent, the user loses access to the data forever. The user may have any number of specific goals, but in general they are trying to discover as much knowledge as they can with the budget they are given. Note that other research has explored a different scenario where the data owner wishes to publish a differentially-private version of their dataset to the public \citep{Blum2013,Li2014}, with some using decision trees to do so \citep{Mohammed2011}. This scenario is different from the one we address, where a data mining algorithm accesses the private data directly.

We use strong differential privacy, with no additional assumptions about the data. Nor do we use a non-zero $\delta$ when considering the more general $(\epsilon,\delta)$-differential privacy \citep{Dwork2014}, or otherwise weaken the definition \citep{Rana2016}.

\subsection{Our Contribution} \label{subsec:contribution}

In this paper, we propose a differentially-private decision forest algorithm that makes very efficient use of the privacy budget $\epsilon$ to output a classifier with high prediction accuracy. We achieve this by proposing a query in Section~\ref{subsec:majority} that outputs the most frequent label in some subset $x_i$ of the data with high probability, and using this query in each leaf node of all the trees in a forest. We prove that this query has low sensitivity, making it reliable even without a large privacy budget. This proof is generalized to the non-binary case, and tested on several datasets that have more than two class labels. It is also generalized to any scenario where a differentially private query aims to output the most (or least) frequent item in a set.

We also extend the work done by \citet{Fan2003}, where combinatorial reasoning is used to calculate the optimal depth of the decision trees. This work was only applicable when all the features were discrete; we extend it by proving the optimal tree depth needed for continuous features in Section~\ref{subsec:depth}.

The reliability and accuracy of our differentially-private decision forest is further improved using theoretical and empirical observations in Sections~\ref{subsec:disjoint} and \ref{subsec:numtrees}. We provide empirical results throughout the paper, demonstrating the real-world effect of our theory. The methodology of our experiments is laid out in Section~\ref{subsec:methodology}.

We finish by demonstrating that our algorithm achieves significantly better prediction accuracy than other similar algorithms, both statistically and practically, in Section~\ref{sec:experiments}. We show that even when competing against a decision forest that uses a weaker definition of differential privacy \citep{Rana2016}, our algorithm performs very well. This disputes any arguments that traditional $(\epsilon,0)$-differential privacy is too strict to produce high quality results. We believe our work is the first decision tree algorithm to simultaneously provide rigorous privacy guarantees while also outputting a model with useable utility in real-world applications. Other models achieve good results if the privacy budget is high enough, but ours is the first to perform well with a realistic budget.

We contextualize our performance on various datasets with the results obtained by \citet{Breiman2001}'s Random Forest algorithm. This algorithm is very much \emph{non-private}; that is, it makes no attempt whatsoever to protect the privacy of the people in the dataset. This of course means that it can produce more accurate models than a \emph{private} model can ever achieve, and the aim of our work (and others like it) is instead to accomplish the much more difficult task of balancing two fundamentally adversarial concepts -- knowledge discovery and individual privacy -- to produce a high quality model. We discuss the ramifications of our findings in Section~\ref{sec:discussion}. A full implementation of our algorithm is available online at \emph{http://samfletcher.work/code/} and \emph{http://csusap.csu.edu.au/{\raise.17ex\hbox{$\scriptstyle\mathtt{\sim}$}}zislam/}.

\section{Preliminaries} \label{sec:prelim}

In this section we briefly provide the necessary background knowledge for our work. We define and explain differential privacy and some of its properties and mechanisms in Sections \ref{subsec:DP} and \ref{subsec:smooth}. We describe how decision trees are built and used in Section~\ref{subsec:RDTs}. We then lay out the methodology used in all of our experiments, which are presented throughout the paper to accompany the theoretical work.

\subsection{Differential Privacy} \label{subsec:DP}

Differential privacy is a definition that makes a promise to each individual who has personal data in a dataset $x$: ``You will not be affected, adversely or otherwise, by allowing your data to be used in any analysis of the data, no matter what other analyses, datasets, or information sources are available'' \citep{Dwork2014}. The output of any query performed on a dataset $x$ will be very similar to the output of the same query performed on dataset $y$, where $x$ and $y$ only differ by any one individual. Note that this does not promise that an attacker will not learn anything about an individual; only that any information they do learn could have been learned even if the individual was not in the dataset. More formal definitions, including some additional properties of differential privacy, are presented below. Fig.~\ref{fig:pipeline} provides a high-level view of how a user interfaces with a dataset $x$ using differential privacy. We refer the reader to \citet{Dwork2014} for a more thorough introduction of differential privacy.

We write the following definitions in terms of some query (i.e., function) $f$ submitted to a dataset $x$ describing $n$ records (i.e., individuals) from a universe $D$. We define the distance between two datasets $x$ and $y$ using the Hamming distance $||x-y||_1$, which equals the number of records that would have to be added or removed from $x$ before it is identical to $y$. If the Hamming distance $||x-y||_1$ equals 1, the datasets differ by only one record and we call $x$ and $y$ ``neighbors''. We denote the cardinality (i.e., size, or number of elements) of $x$ as $|x|$.

\begin{definition}[Differential Privacy \citep{Dwork2006}] \label{def:DP}
A query $f$ is $\epsilon$-differentially private if for all outputs $g\subseteq Range(f)$ and for all data $x,y\in D^n$ such that $||x-y||_1\leq 1$: 
\begin{equation}
Pr(f(x)=g) \leq e^\epsilon \times Pr(f(y)=g) \enspace .
\end{equation}
\end{definition}

The parameter $\epsilon$ can be considered as a ``cost'', with multiple costs summing together:
\begin{definition}[Composition \citep{McSherry2007}] \label{def:composition}
The application of all queries $\{f_i(x)\}$, each satisfying $\epsilon_i$-differential privacy, satisfies $\sum\limits_i {\epsilon_i}$-differential privacy.
\end{definition}
If the same query is submitted to multiple subsets of the data, with no overlapping records, the costs do not need to be summed:
\begin{definition}[Parallel Composition \citep{McSherry2009}] \label{def:parallel}
For disjoint subsets $x_i\subset x$, let query $f(x_i)$ satisfy $\epsilon$-differential privacy; then applying all queries $\{f(x_i)\}$ still satisfies $\epsilon$-differential privacy.
\end{definition}
In other words, the privacy cost of multiple queries applied to the same data composes (i.e., is summed together), and a single query applied to different subsets of data (with no overlapping records) can be asked in parallel at no extra cost.

There are multiple ways to spend $\epsilon$. One of the more common implementations of differential privacy is the Exponential Mechanism, which returns the ``best'' discrete output with high probability:
\begin{definition}[Exponential Mechanism \citep{McSherry2007}]\label{def:expo}
Using a scoring function $u(z,x):u\rightarrow\mathbb{R}$ where $u$ has a higher value for more preferable outputs $z\in Z$, a query $f$ satisfies $\epsilon$-differential privacy if it outputs $z$ with probability proportional to $\exp{(\frac{\epsilon u(z,x)}{2\Delta(u)})}$. That is,
\begin{equation}\label{eq:expo}
Pr(f(x)=z) \propto \exp{\left(\frac{\epsilon\times u(z,x)}{2\Delta(u)}\right)} \enspace .
\end{equation}
\end{definition}

In the above definition, $\Delta(u)$ refers to the sensitivity of $u$ \citep{Dwork2006a}. The sensitivity of a function is the maximum amount the output of the function can differ by when considering two neighboring datasets $x$ and $y$. The most simple form of sensitivity $\Delta(u)$ is global sensitivity, defined as
\begin{definition}[Global Sensitivity \citep{Dwork2006a}] \label{def:sensitivity}
A query $f$ has global sensitivity $GS(f)$, where:
\begin{equation}\label{eq:sensitivity}
GS(f) = \max_{x,y:||x-y||_1\leq 1}||f(x)-f(y)||_1 \enspace .
\end{equation}
\end{definition}
Our proposed algorithm uses a more sophisticated form of sensitivity, known as smooth sensitivity, described below in Section~\ref{subsec:smooth}.

\subsection{Smooth Sensitivity}\label{subsec:smooth}

The global sensitivity of a function $f$ is the theoretically largest difference between the output of $f(x)$ and $f(y)$, for any possible dataset $x$ and its neighbor $y$. Instead, we can consider the local sensitivity of $f$, which takes into account a specific $x$:

\begin{definition}[Local Sensitivity \citep{Nissim2007a}]\label{def:local}
For $f:D^n \rightarrow \mathbb{R}^d$ where $n,d\in\mathbb{N}$, and data $x\in D^n$, the local sensitivity of $f$ at $x$ (with respect to the $\ell_1$ metric) is
\begin{equation}
LS_f(x) = \max_{y:||x-y||_1\leq 1}||f(x)-f(y)||_1 \enspace .
\end{equation}
\end{definition}
Unfortunately this definition of sensitivity is not differentially private. Nissim et al. developed a method for making it differentially private in 2007, dubbing it smooth sensitivity:
\begin{definition}[Smooth Sensitivity \citep{Nissim2007a}]\label{def:smooth}
The local sensitivity of $f$, with distance $k$ between datasets $x$ and $y$, is 
\begin{equation}
S^k(x) = \max_{y:||x-y||_1\leq k} LS_f(y) \enspace .
\end{equation}

The smooth sensitivity of $f$ can now be expressed using $S^k(x)$:
\begin{equation}\label{eq:smooth}
S^*(f, x) = \max_{k=0,1,...,n} e^{-k\epsilon}S^k(x)
\end{equation}
where $\epsilon$ is the privacy budget of $f$.
\end{definition}

Smooth sensitivity allows for much less noise to be added while still achieving differential privacy, by analyzing the actual dataset $x$ instead of just assuming the worst-case scenario.

\subsection{Random Decision Trees}\label{subsec:RDTs}

Our proposed technique involves building an ensemble of random decision trees. We briefly go over the basics of random decision trees \citep{Fan2003} here. Decision tree algorithms are a non-parametric supervised learning method used for classification \citep{Han2006a}. They make no assumptions about the distribution of the underlying data, and are trained on known labeled data to correctly classify previously unknown data. The algorithm is what builds the tree, and the resulting tree then acts as a model for making predictions.

Fig.~\ref{fig:tree} is an example of a decision tree. A decision tree is an acyclic directed graph, built using top-down recursive partitioning of the training data. The records in the dataset are recursively divided into subsets using ``tests'' in each node of the tree. The test checks the value each record has for a chosen feature, sending records to the child node that matches the value they have. 

\begin{figure}[t]
	\centering
	\includegraphics[width=2.0in]{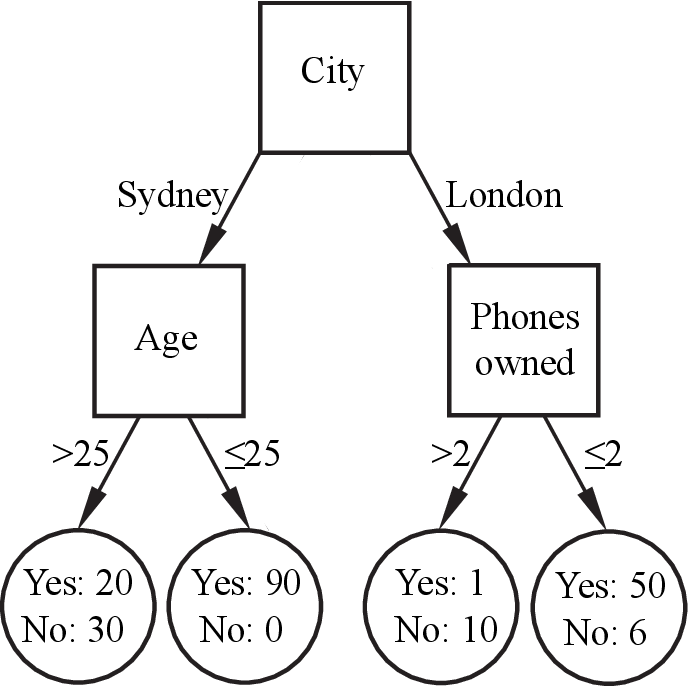}
	\caption{An example of a decision tree, with both discrete and continuous features, a depth of $d=3$, and a binary (``Yes/No'') class label.}
	\label{fig:tree}
\end{figure}

\subsection{Experiment Methodology}\label{subsec:methodology}

We present empirical results throughout the paper, demonstrating how the theory of our algorithm performs in practice. All our experiments are repeated ten times, with each test using ten-fold cross validation, for a total of 100 results that are then averaged. We include one standard deviation when presenting the average result. The majority of our experiments use a privacy budget of $\epsilon=1$. We also perform some experiments with $\epsilon=0.01,0.1,0.2$. In real-world scenarios, the privacy budget given to a user very much depends on that specific case; while a value of $\epsilon=0.01$ is sometimes suggested \citep{Dwork2014,Dwork2008}, values as high as $\epsilon=8.6$ have been used in large, public projects \citep{Machanavajjhala2008}. We also demonstrate in Section~\ref{subsec:size} how the larger the size $n$ of a dataset, the smaller $\epsilon$ can be without injecting too much noise into the resulting trees; a phenomena well understood when implementing differential privacy \citep{Dwork2014,Dwork2008,Dwork2011}.

\begin{table}[t]
\renewcommand{\arraystretch}{1.3}
\centering
\footnotesize
	\begin{tabular}{>{\centering\arraybackslash}m{4.0cm} >{\centering\arraybackslash}m{2.5cm} >{\centering\arraybackslash}m{2.5cm}}
	\noalign{\smallskip}\hline\noalign{\smallskip}	
	\textbf{Dataset Name} & \textbf{n\_informative} & \textbf{n\_random} \\
	\noalign{\smallskip}\hline\noalign{\smallskip}
	SynthA & 5 & 0 \\
	SynthB & 10 & 0 \\
	SynthC & 15 & 0 \\
	SynthD & 10 & 5 \\
	SynthE & 5 & 10 \\
	SynthF & 5 & 5 \\
	SynthG & 10 & 10 \\
	\noalign{\smallskip}\hline
  \end{tabular}
\caption{Parameters for the synthetic datasets we use throughout the paper.}
\label{tab:synthetic}
\end{table}

In our experiments involving real-world data, the data was collected from the UCI Machine Learning Repository \citep{Bache2013}. Details of these datasets are presented in Table~\ref{tab:depths_examples}; other details are available on the UCI website. Some of the real-world datasets we use have more than two class labels; PenWritten has ten class labels, and WallSensor, Nursery and Claves have four. One of the advantages of our findings is that they are generalized to the multi-label (i.e., non-binary) case, and so we include that case in our testing. In experiments involving synthetic data, the data are generated with the \emph{sci-kit learn} package in Python \citep{Pedregosa2011}. The parameters that differentiate each synthetic dataset are defined by the \emph{sklearn.datasets.make\_classification} function. We generate $n=30,000$ records for each dataset, with different numbers of continuous features $m$, and use a balanced binary class feature. The parameters using non-default settings are defined as follows:

\textbf{n\_informative:} ``The number of informative features. Each class is composed of a number of Gaussian clusters each located around the vertices of a hypercube in a subspace of dimension \emph{n\_informative}. For each cluster, informative features are drawn independently from $N(0,1)$ and then randomly linearly combined [into records] within each cluster in order to add covariance.'' \citep{Pedregosa2011}

\textbf{n\_random:} The number of useless features, generated randomly.

The above parameters sum to the total number of features $m$ generated for a dataset. Full details about the parameters can be found in the online documentation for \emph{sklearn.datasets.make\_classification}. Table~\ref{tab:synthetic} provides the parameter values we use for the seven synthetic datasets used in our experiments.

In some experiments we compare our differentially private algorithm to ``non-private'' algorithms. By ``non-private'' we mean classification algorithms that do not attempt to protect privacy in any way -- they simply try to maximize prediction accuracy as much as possible. We present results for two non-private algorithms in this paper; \citet{Breiman2001}'s Random Forest algorithm, and a version of our proposed algorithm with the privacy protection removed. More specifically, the non-private version of our proposed algorithm is exactly the same as the private version except for the following: it uses a privacy budget of $\epsilon=\infty$ (i.e., it always outputs the correct answer); and it filters all the data through every tree instead of a subset of data (explained further in Section~\ref{subsec:disjoint}). Note that our aim is not to beat these non-private techniques (which is all but impossible), but simply to use them as a reference point for how high prediction accuracy can realistically get for the datasets we use.

\section{Related Work}\label{sec:related}

Performing data mining while preserving the privacy of the individuals described in the data has been a topic of much research for nearly 20 years \citep{Brankovic1999}. Many techniques have been developed to preserve privacy, such as $k$-anonymity \citep{Sweeney2002} and noise addition \citep{Agrawal2000,Yun2015}. Decision tree algorithms have been proposed in the past for both these types of privacy, such as \citet{Fung2005,Fung2007} for k-anonymity and \citet{Islam2011,Fletcher2015a} for noise addition, all aiming to keep the quality of the data mining results as high as possible \citep{Fletcher2014}. 

In 2006 a new paradigm was proposed: differential privacy \citep{Dwork2006}. Many data mining algorithms have since been developed that harness differential privacy \citep{Ji2014,Sarwate2013,Zhu2014}. Applications range from releasing private data using decision trees \citep{Mohammed2011}, to kernel SVM learning \citep{Jain2013}, to clustering \citep{Chen2015}. A querying language named PINQ has also been designed by Frank McSherry at Microsoft, allowing users to query a dataset in similar way to SQL, except that the outputs are differentially-private with no privacy expertise required from the user \citep{McSherry2009}. In the following section, we focus in on the differentially-private applications closest to our proposed technique.

\subsection{Differentially Private Random Decision Trees} \label{subsec:DP-RDTs}

Several differentially-private decision tree algorithms have been proposed in recent years \citep{Friedman2010,Jagannathan2012,Fletcher2015b,Fletcher2015c,Rana2016}. Of these, two of them took a similar approach to our paper and used random decision trees to construct a decision forest \citep{Jagannathan2012,Fletcher2015c}. The other three make more traditional decision trees, using greedy heuristics in each node to construct non-random trees \citep{Friedman2010,Fletcher2015b,Rana2016}. All of them achieve differential privacy by adding Laplace noise \citep{Dwork2006a} to the counts of the labels in the nodes.\footnote{The greedy decision tree algorithms also use the Exponential Mechanism to add noise when choosing features in each node.} While this approach makes good use of parallel composition (Definition~\ref{def:parallel}), it scales poorly with multiple labels, since noise needs to be independently added to each label count.

One of the disadvantages of using a splitting criteria in each node is that the user must query the data to do so. This is an expenditure of the privacy budget that random decision trees avoid entirely. Random decision trees have been shown to be competitive against greedy decision trees even in non-private scenarios \citep{Fan2003,Geurts2006}, and especially appropriate in our private scenario. Empirically, this intuition has been validated \citep{Jagannathan2012,Fletcher2015c}, with better prediction accuracy being achieved over their greedy counterparts \citep{Friedman2010,Fletcher2015b,Rana2016}. \citet{Jagannathan2012} improved prediction accuracy by taking advantage of some combinatorial reasoning developed by \citet{Fan2003}, whereby the optimal tree depth is equal to half the number of features, $m/2$. \citet{Fan2003} also empirically demonstrated that 10 to 30 trees is often enough to achieve most of the possible prediction accuracy potential, with increases in accuracy flattening out beyond 30 trees.

However, none of these techniques can take advantage of smooth sensitivity, and thus add much more noise than is necessary to build a differentially-privacy classifier. In fact, the amount of Laplace noise that needs to be added to a frequency query (such as label counts) cannot be reduced by using smooth sensitivity instead of global sensitivity; adding or removing one record can always change a count by 1, even when considering a specific dataset $x$. What this means is that submitting a frequency query to the dataset is an inherently expensive query, and should be avoided in favor of less expensive queries if possible. We demonstrate that this is indeed possible by devising a less sensitivity query that outputs a similar answer, resulting in less noise overall.

Aside from our proposed algorithm and \citet{Rana2016}, none of the above decision tree algorithms tested their algorithms with continuous features. However \citet{Friedman2010} includes a technically correct but costly (in terms of privacy) extension for handling continuous features, and \citet{Jagannathan2012} states that their algorithm can be trivially extended to handle continuous features by uniformly randomly selecting a split point from the continuous feature's domain. Randomly selecting a split point is the same approach used by non-private random decision trees, and is the same approach we implement for our proposed algorithm. We compare our algorithm to \citet{Friedman2010}, \citet{Jagannathan2012} and \citet{Rana2016} in our experiments, including the extensions for continuous features.

\section{Our Decision Forest Algorithm} \label{sec:forest}

We propose a decision forest algorithm that is tailored to the differential privacy scenario. The complete algorithm can be seen in Algorithm~\ref{algorithm} and Algorithm~\ref{algorithm2}. We summarize our algorithm with the following steps:

\begin{description}
\item[Step 1] Calculate the optimal tree depth (Section~\ref{subsec:depth} and Line~\ref{alg:d} of Algorithm~\ref{algorithm}).
\item[Step 2] Decide how many trees $\tau$ to build (Section~\ref{subsec:numtrees} and user input in Algorithm~\ref{algorithm}). 
\item[Step 3] Build a forest (Lines~\ref{alg:ensemble}--\ref{alg:ensemble2} of Algorithm~\ref{algorithm}) of $\tau$ random decision trees, without needing to query the data (Section~\ref{subsec:RDTs} and the \texttt{BuildTree} procedure in Algorithm~\ref{algorithm2}).
\item[Step 4] Query the leaf nodes and output the majority class labels (Section~\ref{subsec:majority}, Line~\ref{alg:expo} of Algorithm~\ref{algorithm2}).
\item[Output] The decision forest model has finished being built. It can now be used to classify the labels of unseen data (Section~\ref{subsec:RDTs}).
\end{description}

\begin{algorithm}[t]
\small
\caption{\small The proposed Differentially Private Random Decision Forest with Smooth Sensitivity.}
\label{algorithm}
\begin{algorithmic}[1]
\Procedure{BuildForest}{Privacy budget $\epsilon$, dataset $x$, number of trees $\tau$, set of continuous features $S$, set of discrete features $R$, class label $C$}
\State $F \leftarrow \{\}$
\State $d \leftarrow$ The optimal tree depth according to our \textbf{Theorem~\ref{theorem_depth}}. \label{alg:d}
\For{~$t = 1,\ldots ,\tau$} \label{alg:ensemble}
	\State $T \leftarrow $ \Call{BuildTree}{$d$, 0, $S$, $R$}
	\State $F \leftarrow F \cup T$
\EndFor \label{alg:ensemble2}
\State $F \leftarrow$ \Call{GetMajorityLabels}{$\epsilon$, $x$, $C$, $F$}
\State\Return $F$
\EndProcedure

\end{algorithmic}
\end{algorithm}

\begin{algorithm} 
\footnotesize
\caption{\small A continuation of the algorithm presented in Algorithm~\ref{algorithm}.}
\label{algorithm2}
\begin{algorithmic}[1]

\Procedure{BuildTree}{Maximum tree depth $d$, current depth $d'$, continuous features $S$, discrete features $R$}
\State $T \leftarrow \{\}$
\If{~$d'<d$} \Comment{Termination criteria.}
	\State Uniformly randomly select a feature $g$ from $S \cup R$ to split the current node with.
	\If{~$g \in S$}
		\State Uniformly randomly select a splitting point $p$ within the current domain of $g$.
		\State Original domain of $g~\leftarrow~$Update the lower bound to $p$.
		\State $T \leftarrow T~\cup$ \Call{BuildTree}{$d$, $d'+1$, $S$, $R$} \Comment{Left child node.}
		\State Original domain of $g~\leftarrow~$Update the upper bound to $p$.
		\State $T \leftarrow T~\cup$ \Call{BuildTree}{$d$, $d'+1$, $S$, $R$} \Comment{Right child node.}
	\Else{~$g \in R$}
		\State $B \leftarrow B - g$ \Comment{Discrete features can only be selected once in a root-to-leaf path.}
		\ForAll{~$g_i\in g$}
			\State $T \leftarrow T~\cup$ \Call{BuildTree}{$d$, $d'+1$, $S$, $R$} \Comment{All child nodes.}
		\EndFor
	\EndIf
\EndIf
\State\Return $T$
\EndProcedure

\Procedure{GetMajorityLabels}{Privacy budget $\epsilon$, dataset $x$, class label $C$, forest $F$} 
\For{~$t = 1,\ldots ,\tau$}
	\State Query $f(x_t) \leftarrow$ Scope query $f$ to $x_t$ where $x_t$ is a disjoint subset of $x=\left\{x_1,...,x_\tau\right\}$
	\ForAll{Leaf nodes $L=\left\{L_i,\forall i\right\}$ in tree $F_t$, $F=\left\{F_1,...,F_\tau\right\}$}
		\State $f(x_t\cap L_i) \leftarrow$ Scope $f(x_t)$ with all the tests in the root-to-leaf path leading to leaf node $L_i$.
		\State Majority label of $L_i \leftarrow c\in C$ outputted by the Exponential Mechanism using $f(x_t\cap L_i)$ and $\epsilon$, as well as the scoring function and smooth sensitivity proposed in \textbf{Theorem~\ref{theorem}}. \label{alg:expo}
	\EndFor
\EndFor
\State\Return $F$
\EndProcedure
\end{algorithmic}
\end{algorithm}

The construction of the random decision trees is the same as the conventional approach \citep{Fan2003,Geurts2006}; features are randomly chosen for each node. The \texttt{BuildTree} procedure in Algorithm~\ref{algorithm2} outlines the recursive tree-building process. The novel parts of our algorithm are the following: how we output the majority (i.e., most frequent) label of each leaf node (Section~\ref{subsec:majority}); our efficient utilization of the privacy budget (Section~\ref{subsec:disjoint}); our proposed tree depth, extending the non-private work of \citet{Fan2003} to handle numerical features (Section~\ref{subsec:depth}); and the number of trees we build (Section~\ref{subsec:numtrees}).

Observe that in Step 4, the query constructed for each leaf node includes the rules (i.e., tests) used by the nodes above the leaf node leading back to the root node. This is how a differentially-private algorithm can achieve the ``filtering'' process described in Section~\ref{subsec:RDTs}; since we can only access the data via queries, we cannot hold the whole dataset in memory for the root node, and partition it down through the tree. Instead, each leaf node's query narrows the parameters of the what training records are included so that they match the tests in the root-to-leaf path. This is an engineering consideration more than anything (it is less computationally efficient for example), with no effect on the algorithm's results, but is worth mentioning nonetheless.

Differential privacy is achieved by our algorithm only outputting the following: the structure of the trees (which does not use the data); and the most frequent label in each leaf node, which is done using the Exponential Mechanism. The user can then use the outputted labels from the leaf nodes in whatever way they wish; differential privacy is immune from post-processing, and differentially-private outputs can never incur additional privacy costs after the fact \citep{Dwork2014}. More specifically, the user is free to use majority voting; they can predict the label of a new record using the most common predicted label from all the trees in the ensemble.

\subsection{Outputting the Majority Label}\label{subsec:majority}

When outputting details about leaf nodes, rather than trying to return approximately-correct class frequencies like in \citet{Jagannathan2012} and \citet{Fletcher2015c}, we instead observe that this is more information than is necessary in order to make a highly predictive classifier. To predict the class label of future records, we only need to know the majority (i.e., most frequent) class label in each leaf node (regardless of the number of times that label occurred). By not ``wasting'' some of the privacy budget on information we do not need, we propose querying the data with the Exponential Mechanism to only output the (discrete) class label that is most frequent.

Definition~\ref{def:expo} describes how the Exponential Mechanism is capable of returning the discrete output of the most frequent class label in a leaf $x$. Our query takes the same form as seen in the definition, where the query $f$ will output the most frequently occurring label in $x$ with high probability. The precise probability is dependent on the scoring function $u$ and the privacy budget $\epsilon$. We propose the following novel scoring function:

\begin{theorem}\label{theorem}
Given the leaf $x$ of a decision tree, the most frequent label can be differentially-privately queried with the Exponential Mechanism \citep{McSherry2007}. The scoring function used for this query can be the piecewise linear function:
\begin{equation}\label{eq:score}
u(c, x) = 
	\begin{cases} 
      1 & c = \argmax_{i\in C} n_i \\
      0 & \mbox{otherwise}
   \end{cases}
	\enspace ,
\end{equation}
where $n_c$ is the number of occurrences of $c$ in $x$. Each class label $c\in C$ will have a score: 1 if the label is the most frequently occurring label in the leaf; 0 otherwise.

The smooth sensitivity of $u(c, x)$ is
\begin{equation}\label{eq:score_smooth}
S^*(u, x) = e^{-j\epsilon}
\end{equation}
where $\epsilon$ is the privacy budget of the query and $j$ equals the difference between the most frequent and the second-most frequent labels in $x$, $n_{c_1}-n_{c_2}$.
\end{theorem}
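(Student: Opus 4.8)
The plan is to work directly down the chain of definitions in Section~\ref{subsec:smooth}: first pin down the local sensitivity $LS_u(y)$ as a function of the label counts of $y$, then propagate it outward to obtain $S^k(x)$, and finally optimize the discounted quantity $e^{-k\epsilon}S^k(x)$ over $k$. I would begin by noting that the query returns the single most-frequent label, so the scoring vector $u(\cdot,y)$ changes only when a \emph{different} label becomes the strict plurality winner. Writing $n_{c_1}\ge n_{c_2}\ge\cdots$ for the sorted label counts in $y$ and $g(y)=n_{c_1}-n_{c_2}$ for the top gap, I would observe that adding or removing a single record changes any one count by exactly one, and hence changes $g(y)$ by at most one. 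It follows that if $g(y)\ge 1$ then no neighbour of $y$ can make a label other than $c_1$ the strict winner — the most a single modification can do is force a tie — so $LS_u(y)=0$; whereas if $g(y)=0$ the top two labels are tied and a single record flips the winner, giving $LS_u(y)=1$. This establishes $LS_u(y)=\mathbf{1}[g(y)=0]$.

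Second, I would compute $S^k(x)=\max_{y:\|x-y\|_1\le k}LS_u(y)$ by locating the nearest tie to $x$. Since each added or removed record shrinks the top gap by at most one, starting from $g(x)=j$ it takes at least $j$ modifications to reach a tied dataset; conversely, removing $j$ records from the leading label $c_1$ (or adding $j$ records to $c_2$) produces a tie at Hamming distance exactly $j$. Hence a dataset with $LS_u=1$ lies within distance $k$ of $x$ if and only if $k\ge j$, so $S^k(x)=\mathbf{1}[k\ge j]$.

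Finally, I would substitute into Eq.~\ref{eq:smooth}: because $S^k(x)$ is zero for $k<j$ and one for $k\ge j$, the product $e^{-k\epsilon}S^k(x)$ vanishes until $k=j$ and thereafter equals the decreasing sequence $e^{-k\epsilon}$, so its maximum over $k\in\{0,\dots,n\}$ is attained at the smallest admissible index $k=j$, yielding $S^*(u,x)=e^{-j\epsilon}$ as claimed.

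The main obstacle is the first step — characterizing $LS_u(y)$ at the tie boundary. The delicate points there are (i) fixing the convention for when the argmax ``changes'' so that a single modification producing a mere tie does not itself count as a change of the output label, since the looser convention (counting any tie as a change) would push the threshold to $g(y)\le 1$ and produce an off-by-one $e^{-(j-1)\epsilon}$ in the exponent; and (ii) verifying in the multi-class setting that only the top two counts matter, including the corner case where a third label tied for second can also close the gap. Once $LS_u(y)=\mathbf{1}[g(y)=0]$ is secured, the remaining counting and optimization steps are routine.
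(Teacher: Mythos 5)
Your argument has the same skeleton as the paper's proof: show that $S^k(x)$ is a step function jumping from $0$ to $1$ at a threshold governed by the gap $j=n_{c_1}-n_{c_2}$, then note that $e^{-k\epsilon}S^k(x)$ is maximized at the smallest $k$ for which the step has occurred. The difference is in how the threshold is derived. You route through the local sensitivity $LS_u(y)$ of intermediate datasets, exactly as Definitions~\ref{def:local} and~\ref{def:smooth} prescribe, and then ask how far from $x$ the nearest dataset with $LS_u(y)=1$ lies. The paper instead writes $S^k(x)=\max_{y:||x-y||_1\le k}\max_{c}||u(c,x)-u(c,y)||_1$, i.e.\ it measures how far the score vector of $y$ can drift from that of $x$ itself, which is not the same quantity as $\max_{y:||x-y||_1\le k}LS_u(y)$. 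Your reading is the formally correct one.

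The catch is that the delicate point you flag in your last paragraph is not a side issue --- it is exactly where the two computations diverge, and your resolution of it is asserted rather than proved. Under the paper's own tie convention (``at this point, 2 labels will be reporting a score of 1''), a dataset $y$ with gap $g(y)=1$ already has $LS_u(y)=1$: adding one record to the runner-up creates a tie and lifts that label's score from $0$ to $1$. Such a $y$ lies at distance $j-1$ from $x$, so the literal $\max_y LS_u(y)$ route yields $S^{j-1}(x)=1$ and hence $S^*(u,x)=e^{-(j-1)\epsilon}$, not $e^{-j\epsilon}$. Your key lemma $LS_u(y)=\mathbf{1}[g(y)=0]$ therefore requires a tie-handling rule under which the score vector does not move when a tie merely forms; any symmetric rule (all tied maxima score $1$, or all score $0$) fails this and gives $LS_u(y)=\mathbf{1}[g(y)\le 1]$ instead. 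So you have correctly isolated the one step that needs care, but to close the proof you must either exhibit and fix a tie-breaking convention making $LS_u(y)=\mathbf{1}[g(y)=0]$ true, or accept the off-by-one and prove $e^{-(j-1)\epsilon}$. The paper avoids confronting this only through its non-literal substitution for $S^k(x)$; everything downstream of the threshold (nearest tie at distance $j$, maximization attained at $k=j$) is identical in both arguments and is correct.
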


\begin{proof}
The global sensitivity of the scoring function $u$ seen in (\ref{eq:score}) is 1 because, when considering any possible neighbors $x$ and $y$, adding or removing any record has the potential to change which label(s) occurs most frequently:
\begin{equation}\label{eq:score_GS}
GS(u) = \max_{x,y:||x-y||_1\geq 1}\max_{c\in C}||u(c,x)-u(c,y)||_1 = 1 \enspace .
\end{equation}
A label will change from a score of 0 to 1 when it appears equally as frequently as the original most frequent label (at this point, 2 labels will be reporting a score of 1). A label's score will change from 1 to 0 when another label occurs more frequently. In either case, the most that the scoring function $u$ can change by is $GS(u)=1$.

The local sensitivity of $u$ differs from the global sensitivity by taking into account a specific $x$, rather than considering the worst theoretical outcome over all possible $x$'s. From Definition~\ref{def:local} and Definition~\ref{def:smooth} we get
\begin{equation}
S^k(x) = \max_{y:||x-y||_1\leq k}\max_{c\in C} ||u(c,x)-u(c,y)||_1 \enspace .
\end{equation}
Because we are considering the actual frequencies of each label in $x$, we are interested in how many records would need to be added or removed before a different label achieves a score $u$ of 1. This occurs when enough records with a different label have been added to equal the most frequent label in $x$, or when enough records have been removed from $x$ to drop the most frequent label to the same frequency as the next most common label. In either case, this first will occur at a distance $j$ from $x$, where $j$ is the difference between the most frequent and second-most frequent labels in $x$. Until then, for $k<j$, $S^k(x)=0$. When $k=j$, it is possible to have two labels with a score of $u=1$. When $k>j$, it is possible to have one or more labels with a score of $u=1$. Regardless, for $k\geq j$, the local sensitivity is $S^k(x)=1$. To summarize, the local sensitivity of our scoring function $u$ is 0 until $y$ is sufficiently far away from $x$, at which point the sensitivity becomes 1. We can represent this as
\begin{equation}\label{eq:local_possibilities}
S^k(x) = 
	\begin{cases} 
      0 & k<j \\
			1 & k\geq j
	\end{cases}
; j=n_{c_1}-n_{c_2}, k\in\mathbb{N} \enspace ,
\end{equation}
where $n_{c_1}$ and $n_{c_2}$ are the frequencies of the most common and second-most common labels in $x$, respectively.

To finish implementing Definition~\ref{def:smooth}, we input $S^k(x)$ from (\ref{eq:local_possibilities}) into the smooth sensitivity (\ref{eq:smooth}). We now find the value of $k$ for which $e^{-k\epsilon}S^k(x)$ is maximized. For $k<j$, we have
\begin{equation}
e^{-k\epsilon}S^{k}(x) = 0 \enspace .
\end{equation}
Two other possible scenarios exist: $k=j$ and $k>j$. Because $S^k(x)$ is never larger than 1, and $e^{-k\epsilon}$ becomes smaller as $k$ gets larger, we can deduce that $e^{-k\epsilon}S^k(x)$ is largest when $k=j$:
\begin{equation}
S^*(u, x) = \max_{k=0,1,...,n} e^{-k\epsilon}S^k(x) = e^{-j\epsilon} \enspace .
\end{equation}
\end{proof}

The above proof holds in any scenario where the most frequent item in a set is to be outputted. It also holds if the \emph{least} frequent item is the desired output.

\begin{table}
\centering
\footnotesize
\renewcommand{\arraystretch}{1.3}
	\begin{tabular}{>{\centering\arraybackslash}m{0.7cm} >{\centering\arraybackslash}m{3.0cm} >{\centering\arraybackslash}m{3.0cm} >{\centering\arraybackslash}m{3.0cm}}
	\noalign{\smallskip}\hline\noalign{\smallskip}	
	$j$ & $S^*(u, x)$ when $\epsilon=0.01$ & $S^*(u, x)$ when $\epsilon=0.1$ & $S^*(u, x)$ when $\epsilon=1.0$ \\
	\noalign{\smallskip}\hline\noalign{\smallskip}
	0 & 1.00000 & 1.00000 & 1.00000 \\ 
	1 & 0.99004 & 0.90483 & 0.36788 \\
	5 & 0.95122 & 0.60653 & 0.00674 \\
	10 & 0.90483 & 0.36788 & 0.00005 \\
	50 & 0.60653 & 0.00674 & $1.93\times 10^{-22}$ \\
	100 & 0.36788 & 0.00005 & $3.72\times 10^{-44}$ \\
	500 & 0.00674 & $1.93\times 10^{-22}$ & $7.12\times 10^{-218}$ \\
	\noalign{\smallskip}\hline
  \end{tabular}
\caption{Example smooth sensitivities of our scoring function $u$ using Theorem~\ref{theorem}, when $\epsilon=0.01, 0.1, 1.0$. Note how $j$ and $\epsilon$ have an equal impact on the result -- increasing $j$ tenfold is the same as increasing $\epsilon$ tenfold.}
\label{tab:smooth_examples}
\end{table}

Using our proposed smooth sensitivity instead of the global sensitivity of 1, we can guarantee that equal or less noise is added to all queries. In some cases, the noise can be substantially lower. The smooth sensitivity of a query on data $x$ to return the most frequent label is dependent on only $j$ and $\epsilon$. We present the worst-case scenario (i.e., when $j=0$) as well as several other scenarios in Table~\ref{tab:smooth_examples}, using $\epsilon=0.01$, $\epsilon=0.1$ and $\epsilon=1.0$.

In Fig.~\ref{fig:smooth_vs_global}, we present the practical effect of our smooth sensitivity. We present the prediction accuracy of our differentially-private random decision forest on seven synthetic datasets (described in Section~\ref{sec:experiments}). Two scenarios are demonstrated in the figure; when the Exponential Mechanism queries in the leaf nodes use smooth sensitivity, and when they use global sensitivity. All our other parameters, described in the following sections, use the default settings.

\begin{figure}[t]
	\centering
	{\footnotesize \def\svgwidth{4.0in} 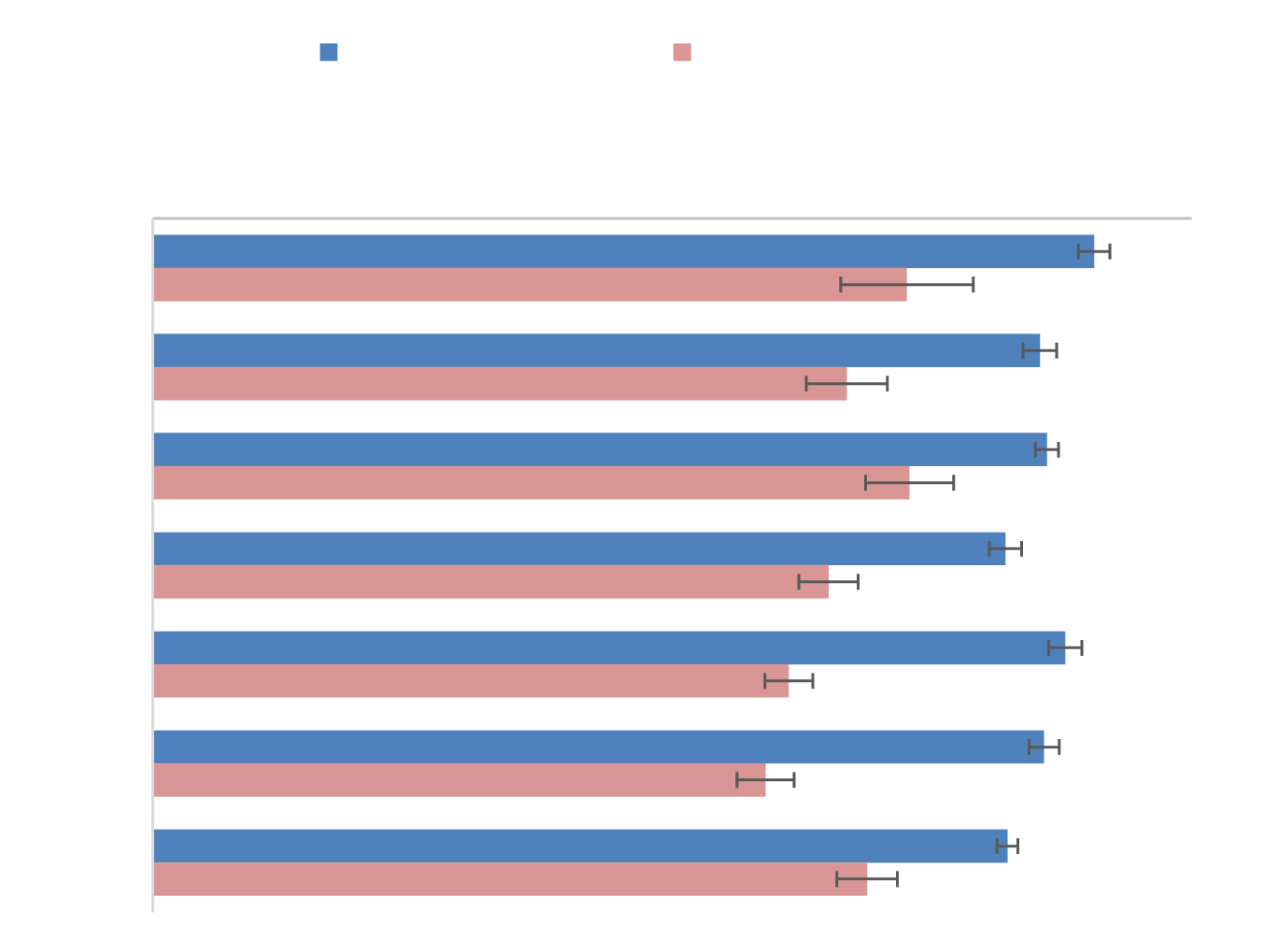}
	\caption{The average prediction accuracy of two possible versions of our technique with a privacy budget of $\epsilon=1$. One version uses our proposed smooth sensitivity of the scoring function in the Exponential Mechanism. The other version uses the global sensitivity of the scoring function (i.e., $\Delta(u)=1$).}
	\label{fig:smooth_vs_global}
\end{figure}

Note the substantial improvement in prediction accuracy when using our smooth sensitivity -- up to 26 percentage points in some cases, and never worse. This matches what we expect from the theory, where even in the worse case scenario, the smooth sensitivity is equal to the global sensitivity. Fig.~\ref{fig:smooth_vs_global} presents the results for $\epsilon=1$, where Table~\ref{tab:smooth_examples} tells us that even with a modest $j$ of 10, the exponent of the Exponential Mechanism is improved 20,000-fold. For $\epsilon=0.1$, the improvement would still be three-fold.

\subsection{Using Disjoint Data}\label{subsec:disjoint}

When building multiple trees (say $\tau$ trees), there are two fundamentally different ways we can use our privacy budget $\epsilon$. One is to use composition (Definition~\ref{def:composition}), where all the records in $x$ ($n=|x|$) are used in every tree and we divide $\epsilon$ evenly amongst the trees, $\epsilon'=\epsilon/\tau$. The other way is to use parallel composition (Definition~\ref{def:parallel}), dividing $x$ into disjoint subsets\footnote{When we refer to disjoint subsets of data, we are referring to a collection of records that are sampled from the total dataset without replacement. Each record can only appear in one disjoint subset.} evenly amongst the trees ($n=|x|/\tau$) and using the entire $\epsilon$ budget in each tree.

When deciding which method to use, let us consider the factors that affect how noisy the output of our query proposed in Theorem~\ref{theorem} is. There are two such factors: $\epsilon$; and the sensitivity $S^*(u, x)$, which is itself dependent on only $\epsilon$ and $j$. When $n$ is larger, each leaf node will contain more records on average. Since we assume all the records in $x$ are sampled from the same population, the multinomial distribution will be approximately the same for both $n=|x|$ and $n=|x|/\tau$ number of records. Thus the relative frequency of both the number of records, and the ratio of class labels, will be the same in each leaf node, and $j$ will change accordingly. We can write this as $\mathbf{E}[j]\propto n$.

One ramification of using all of $x$ in each tree is that the smaller $\epsilon/\tau$ privacy budget affects both the numerator and denominator of the Exponential Mechanism, unlike $j$ which only affects the denominator. This means that $\epsilon$ has a larger impact on the result than $j$, and good-scoring labels have a higher chance of being outputted by the Exponential Mechanism when disjoint subsets of $x$ are used:

\begin{equation}\label{eq:disjoint}
\exp\left(\frac{\epsilon\times u(z,x)}{2\exp(-j/\tau\times \epsilon)}\right) > 
\exp\left(\frac{\epsilon/\tau\times u(z,x)}{2\exp(-j\times \epsilon/\tau)}\right)
\end{equation}
when $\mathbf{E}[j]\propto n$. This effect is most prominent with smaller datasets where $j$ is small, before the exponential nature of the smooth sensitivity overpowers the numerator by more than two or three orders of magnitude.

\begin{figure}[t]
	\centering
	\subfloat[Prediction accuracy]{\footnotesize \def\svgwidth{2.49in} 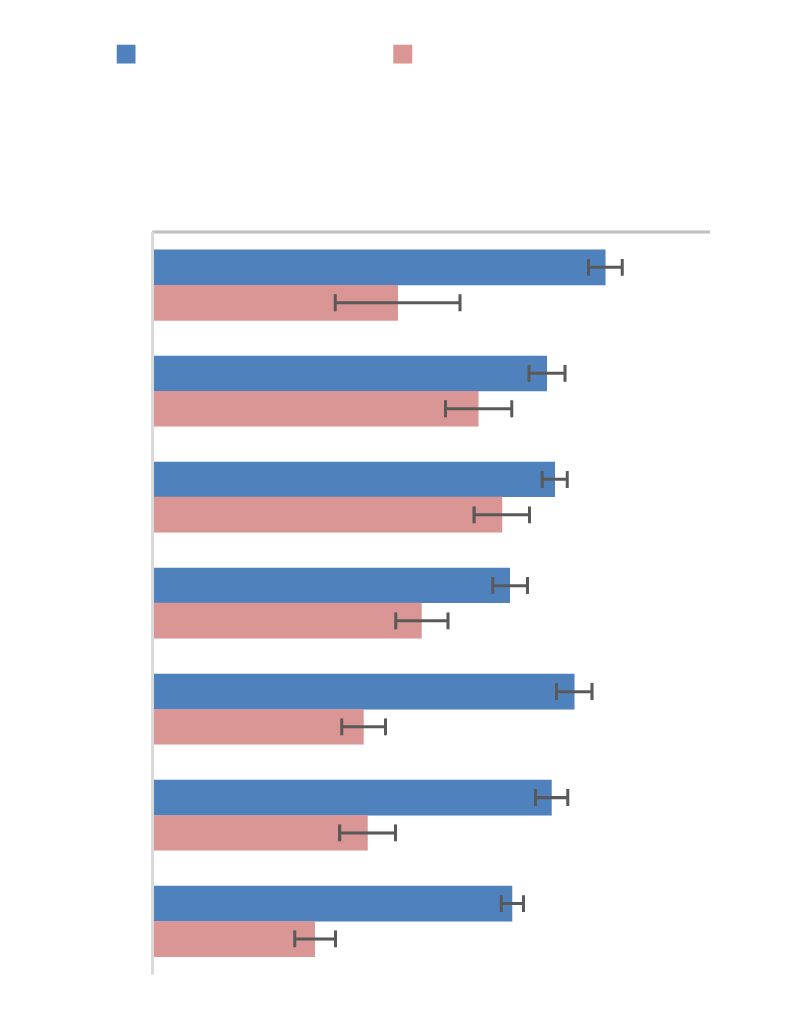\label{fig:disjoint_a}}
	\subfloat[Smooth sensitivity]{\footnotesize \def\svgwidth{2.13in} 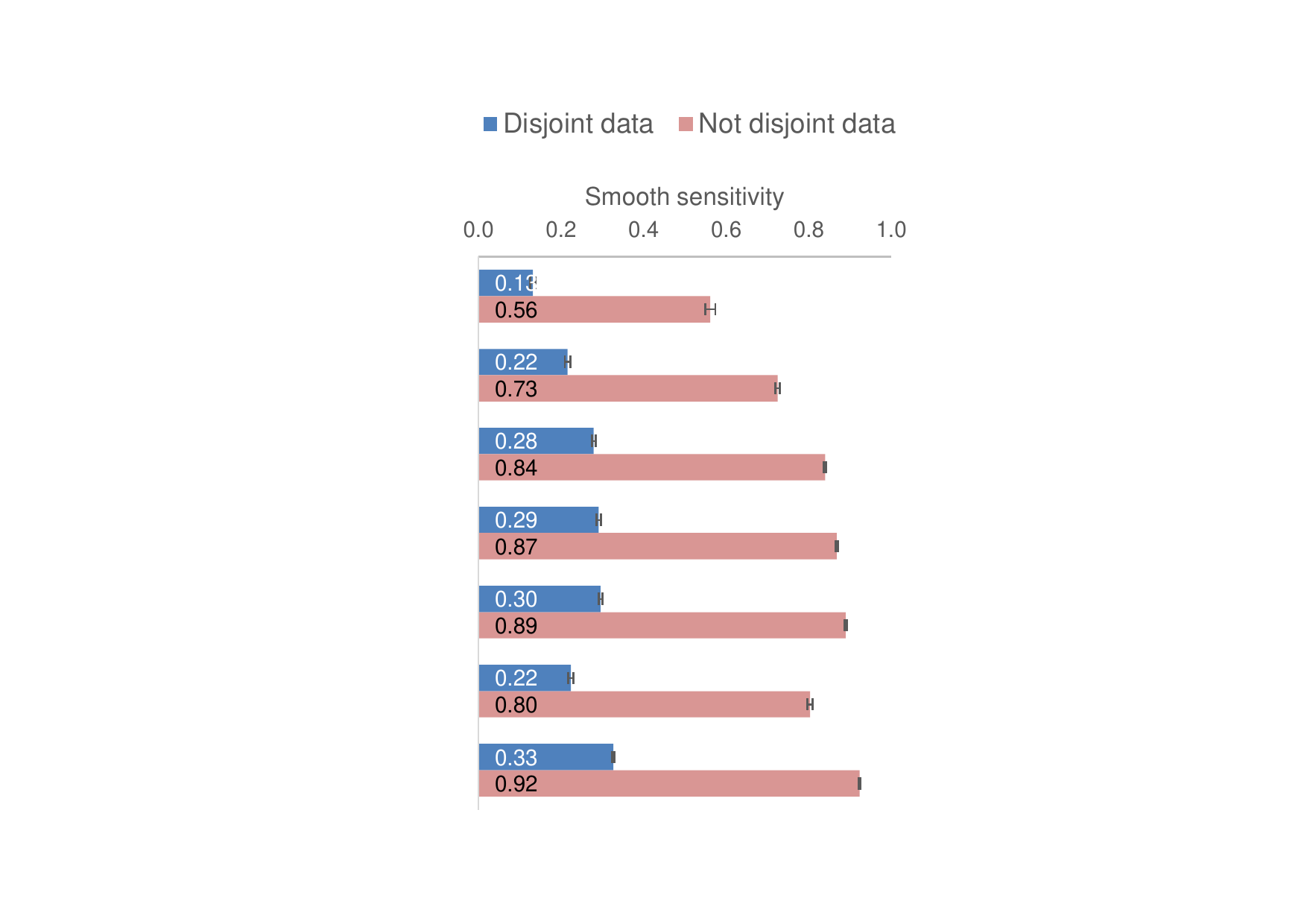\label{fig:disjoint_b}}
	\caption{The (a) average prediction accuracy and (b) average smooth sensitivity of our proposed algorithm, with and without using disjoint data in each decision tree. All other parameters remain constant, using the default settings described in the other sections. The budget is $\epsilon=1$ and the number of trees is $\tau=100$. Recall that lower sensitivity is better. Due to space constraints, the horizontal axis in (a) starts at 50\%.}
	\label{fig:disjoint}
\end{figure}

Another factor to consider when choosing between the composition and parallel composition theorems is the fact that the correlation between $j$ and $n$ is not one-to-one. If it was, that would imply that all leaf nodes that are empty (i.e., have no records in them) when $n=|x|/\tau$ will remain empty when $n=|x|$, when instead the reality is that the sample size was simply not big enough for any records to be in some of the leaf nodes. Therefore we expect that for some leaf nodes, neither their support nor their class label proportions (and in turn, $j$) will increase linearly with $n$. Of course, some previously empty leaf nodes now have records in them, and the most frequent label outputted by the Exponential Mechanism in these leaf nodes will no longer be purely random. The privacy budget is only $\epsilon/\tau$ in this scenario though, leading to the difference between ``purely random'' and ``almost purely random'' being trivially small for some of these leafs. We empirically test the correlation between $j$ and $n$, as well as the number of empty leaf nodes, in Section~\ref{subsec:leafs}.

Due to the above factors, we propose using parallel composition, and using disjoint data in each tree with the full privacy budget. Fig.~\ref{fig:disjoint_a} empirically demonstrates the improved prediction accuracy we see due to this decision. The only parameter changed for the comparison is whether or not discrete subsets of data are used in each of the 100 trees (and if not, $\epsilon$ is divided instead). All other parameters use the default settings, described in the other sections. This decision has the added benefit of substantially decreasing the computation time of our algorithm, from $O(n\log n)$ to $O(\frac{n}{\tau}\log\frac{n}{\tau})$, where $n$ is the number of records in the dataset.

We see improvements of up to 18 percentage points when disjoint data is used, and no losses by more than a fraction of one standard deviation. Aside from the large improvements in prediction accuracy, we can make another observation from Fig.~\ref{fig:disjoint_b}; one that might be surprising at first. The average\footnote{Note that we don't include empty leaf nodes (which have a smooth sensitivity of 1) when calculating the average smooth sensitivity, since 70 to 99.9\% of the leaf nodes are usually empty.} smooth sensitivity (that is, $\exp(-j\epsilon)$) is substantially better when disjoint data is used, when initial intuition might tell us that $j$ and $\epsilon$ should be offsetting one another somewhat equally. Instead, we find that using all of the data in every tree does not lead to an increase in the average $j$ that is as large as the decrease in the privacy budget when going from $\epsilon$ to $\epsilon/\tau$. We explore this phenomena further in Section~\ref{subsec:leafs}.

\subsubsection{Empty Leaf Nodes}\label{subsec:leafs}

For almost any (non-uniform) distribution of feature values, records will start clumping together in certain nodes, with other nodes receiving very few records. This clumping becomes exacerbated for each non-uniformly distributed feature tested in each level of a tree. As a tree grows larger, the chances of some leafs having zero records in them also grows larger. We refer to leaf nodes with zero records in them as ``empty''. We use a simple example to demonstrate:

\begin{example}
Let us imagine we have a dataset made of $m$ features and $n$ records, where each features has three discrete values that follow the normal distribution: value $v_1$ contains 68\% of the records; value $v_2$ contains $95\%-68\%=27\%$; and value $v_3$ has the remaining $100\%-95\%=5\%$ of the records. Let us further assume that all $m$ features are independent, to simplify the simulation. If we were to build a tree of depth $m/2$ with these $m$ features, we would have a tree with $3^{m/2}$ leaf nodes. Out of these leaf nodes, a single leaf would contain $0.68^{m/2}\times n$ records, and ${{m/2}\choose 2}-1$ other leaf nodes would contain proportions of records that were multiplications of 0.68 and 0.27 (such as $0.68\times 0.68\times 0.27\times \ldots$). Conversely, there would be a leaf with $0.05^{m/2}\times n$ records in it; using a conservative $m=10$, this would require a dataset of size $n=3,200,000$ for there to be even one record in this leaf.
\end{example}

\begin{table}[t]
\renewcommand{\arraystretch}{1.3}
\centering
\footnotesize
	\begin{tabular}{>{\centering\arraybackslash}m{1.7cm} >{\centering\arraybackslash}m{1.8cm} >{\centering\arraybackslash}m{1.3cm} >{\centering\arraybackslash}m{2.4cm} >{\centering\arraybackslash}m{2.6cm}}
	\noalign{\smallskip}\hline\noalign{\smallskip}	
	\textbf{Dataset} & 
	\begin{tabular}[c]{@{}c@{}}\textbf{Total No.}\\ \textbf{of Records}\end{tabular}  & 
	\begin{tabular}[c]{@{}c@{}}\textbf{Tree}\\ \textbf{Depth}\end{tabular} & 
	\begin{tabular}[c]{@{}c@{}}\textbf{\% of Empty}\\ \textbf{Leafs (1 tree)}\end{tabular} & 
	\begin{tabular}[c]{@{}c@{}}\textbf{\% of Empty}\\ \textbf{Leafs (30 trees)}\end{tabular} \\
	\noalign{\smallskip}\hline\noalign{\smallskip}
	SynthA & 30000 & 5 & $49.8\pm12.8$ & $68.5\pm2.0$ \\ 
	SynthB & 30000 & 8 & $79.1\pm6.3$ & $90.3\pm0.7$ \\ 
	SynthC & 30000 & 12 & $95.0\pm1.5$ & $98.5\pm0.1$ \\ 
	SynthD & 30000 & 12 & $95.7\pm1.3$ & $98.5\pm0.1$ \\ 
	SynthE & 30000 & 12 & $63.3\pm1.8$ & $73.7\pm0.1$ \\ 
	SynthF & 30000 & 8 & $79.2\pm6.2$ & $89.9\pm0.7$ \\ 
	SynthG & 30000 & 15 & $98.7\pm0.4$ & $99.7\pm0.0$ \\ 
	WallSensor & 5456 & 4 & $48.6\pm12.3$ & $65.1\pm2.1$ \\
	PenWritten & 10992 & 12 & $92.4\pm1.2$ & $97.7\pm0.1$ \\
	GammaTele & 19014 & 8 & $86.2\pm3.8$ & $93.5\pm0.5$ \\
	Adult & 30162 & 9 & $99.8\pm0.0$ & $99.9\pm0.0$ \\
	Mushroom & 5644 & 11 & $99.9\pm0.0$ & $99.9\pm0.0$ \\
	Claves & 10800 & 8 & $0.0\pm0.0$ & $28.7\pm0.4$ \\
	Nursery & 12960 & 4 & $0.2\pm.0.5$ & $7.2\pm0.8$ \\
	\noalign{\smallskip}\hline
  \end{tabular}
\caption{The percentage of leaf nodes with no records in them, when building random trees with eight real-world datasets and seven synthetic datasets. We present the results for one tree and 30 trees, when using disjoint data in each tree, and a privacy budget of $\epsilon=1$. We include one standard deviation for each result. The depth of the trees is defined by a novel theorem that we present in Section~\ref{subsec:depth}.}
\label{tab:empty_leafs}
\end{table}

In our experiments, using both synthetic and real-world data, over 85\% of the leaf nodes in any non-trivial tree are usually empty. By virtue of future data being from the same distribution as the training data, however, these empty leaf nodes are unlikely to be visited by future records. Any records that do finish at an empty leaf node will be predicted to have a class label that is randomly chosen with uniform probability (due to all labels having a score of 0 in the Exponential Mechanism). We consider this to be less damaging than the same scenario in Jagannathan's implementation of a differentially private random decision tree \citep{Jagannathan2012}, where labels with a frequency of zero still have Laplace noise added to them. In an empty leaf node in Jagannathan's implementation, where every label reports a purely random frequency (because the true frequencies are zero), highly confident predictions could be falsely created.

Our empirical results are presented in Table~\ref{tab:empty_leafs}. Note how when the data is divided among more trees, the number of empty leaf nodes always increases, as we would expect. This supports the observation we made with Fig.~\ref{fig:disjoint_b}: $j$ does not increase at the same ratio that $n$ does, because many of the extra records are going to previously empty leaf nodes. While this means these previously empty leaf nodes are no longer outputting a most frequent label with pure randomness, Fig.~\ref{fig:disjoint_b} shows us that this reduction in randomness does not negate the increase in randomness in all leaf nodes from dividing the privacy budget into $\epsilon/\tau$.

\subsection{Tree Depth}\label{subsec:depth}

An existing theory about the depth of a random tree was introduced in Section~\ref{subsec:DP-RDTs}, where the ideal depth is half the number of features $m/2$. However this assumes that the features can only be selected \emph{once} in any root-to-leaf path (by virtue of being discrete features, where all values are separated by the first node that selects the feature). Continuous features, on the other hand, can be randomly chosen any number of times (since there are many more split points remaining that could separate the records). The reasoning behind wanting to test $m/2$ unique features in each path still stands, but the probability of this happening at a depth of exactly $m/2$ is much lower if all (or some) of the features are continuous. We propose a new tree depth using the analysis below.

\begin{theorem}\label{theorem_depth}
The expected number $X$ of continuous features $s$ not tested, on any particular root-to-leaf path of depth $d$, is equal to
\begin{equation}\label{eq:expected}
\mathbf{E}[X|d] = s\left(\frac{s-1}{s}\right)^d \enspace ,
\end{equation}
where each tested features is uniformly randomly selected with replacement. Using the same combinatorial reasoning used in \citet{Fan2003} and described in Section~\ref{subsec:DP-RDTs}, the optimal tree depth is therefore
\begin{equation}\label{eq:depth}
d = \argmin_{d:X<s/2} \mathbf{E}[X|d] \enspace .
\end{equation}
\end{theorem}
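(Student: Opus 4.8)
The plan is to prove the expectation formula (\ref{eq:expected}) by a standard indicator-variable argument and then read off the depth prescription (\ref{eq:depth}) from the resulting closed form. First I would fix the probabilistic model dictated by the \texttt{BuildTree} procedure: on a particular root-to-leaf path of depth $d$, each of the $d$ internal nodes independently selects a splitting feature uniformly at random, and because a continuous feature can be reused further down the path, it is drawn \emph{with replacement} from the $s$ continuous features. For each continuous feature $i\in\{1,\dots,s\}$ I would introduce the indicator $X_i$ that equals $1$ if feature $i$ is never selected anywhere along the path and $0$ otherwise, so that $X=\sum_{i=1}^{s} X_i$ counts the untested continuous features.

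Next I would compute $\Pr(X_i=1)$. At a single node the probability that feature $i$ is \emph{not} the one chosen is $(s-1)/s$, and since the $d$ selections are mutually independent, the probability that feature $i$ is avoided at all $d$ nodes is $((s-1)/s)^d$. Linearity of expectation (which requires no independence among the $X_i$ themselves) then yields
\begin{equation}
\mathbf{E}[X\mid d] = \sum_{i=1}^{s}\Pr(X_i=1) = s\left(\frac{s-1}{s}\right)^d \enspace ,
\end{equation}
which is exactly (\ref{eq:expected}).

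For the depth (\ref{eq:depth}) I would invoke the combinatorial principle of \citet{Fan2003} recalled in Section~\ref{subsec:DP-RDTs}: a random tree is most useful when roughly half of the available features have been tested along each path. For discrete features this target is met exactly at depth $m/2$, because a discrete feature cannot recur once chosen; for continuous features the with-replacement sampling means the number of distinct tested features only meets its target in expectation, so I would translate ``at least half tested'' into the requirement that the expected number of \emph{untested} continuous features fall below $s/2$, i.e. $\mathbf{E}[X\mid d]<s/2$. Because $((s-1)/s)^d$ is strictly decreasing in $d$, this constraint carves out exactly the depths beyond a single threshold, and the prescribed depth is the integer at which $\mathbf{E}[X\mid d]$ first crosses $s/2$, recovering (\ref{eq:depth}).

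The main obstacle is not the expectation computation, which is routine, but justifying that the with-replacement model and the $s/2$ threshold are the correct continuous analogues of Fan2003's discrete $m/2$ rule. In particular I would be careful that the $\argmin$ in (\ref{eq:depth}) is interpreted as selecting the \emph{smallest} feasible depth, i.e. the boundary of the monotone region $\{d:X<s/2\}$, rather than as driving $d$ arbitrarily large; since $\mathbf{E}[X\mid d]$ continues to shrink for all larger $d$, picking the threshold-crossing integer is what matches the design goal of testing about half the features while keeping the tree as shallow, and hence the leaves as populated, as possible.
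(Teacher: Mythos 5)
Your proof is correct and follows essentially the same route as the paper: the paper also frames this as a ``Balls and Bins'' computation, introduces the indicator $X_i$ for feature $i$ never being selected, obtains $\Pr(X_i=1)=\left(\frac{s-1}{s}\right)^d$ from independence of the $d$ node selections, and applies linearity of expectation to get $s\left(\frac{s-1}{s}\right)^d$. Your additional care about interpreting the $\argmin$ as the smallest depth crossing the $s/2$ threshold goes slightly beyond the paper, which leaves that part to the cited combinatorial reasoning of Fan (2003) without further justification.
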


\begin{table}[t]
\renewcommand{\arraystretch}{1.3}
\centering
\footnotesize
	\begin{tabular}{>{\centering\arraybackslash}m{2.5cm} >{\centering\arraybackslash}m{2.3cm} >{\centering\arraybackslash}m{2.0cm} >{\centering\arraybackslash}m{2.0cm}}
	\noalign{\smallskip}\hline\noalign{\smallskip}	
	\textbf{Dataset} & \textbf{Continuous Features} & \textbf{Discrete Features} & \textbf{Depth} \\
	\noalign{\smallskip}\hline\noalign{\smallskip}
	SynthA & 5 & 0 & 5 \\ 
	SynthB & 10 & 0 & 8 \\ 
	SynthC & 15 & 0 & 12 \\ 
	SynthD & 15 & 0 & 12 \\ 
	SynthE & 15 & 0 & 12 \\ 
	SynthF & 10 & 0 & 8 \\ 
	SynthG & 20 & 0 & 15 \\ 
	WallSensor & 4 & 0 & 4 \\
	PenWritten & 16 & 0 & 12 \\
	GammaTele & 10 & 0 & 8 \\
	Adult & 6 & 8 & 9 \\
	Mushroom & 0 & 22 & 11 \\
	Claves & 0 & 16 & 8 \\
	Nursery & 0 & 8 & 4 \\
	\noalign{\smallskip}\hline
  \end{tabular}
\caption{The depths calculated using Theorem~\ref{theorem_depth} for the synthetic and real datasets used in our experiments.}
\label{tab:depths_examples}
\end{table}

\begin{proof}
Our proof is the same as the proof for estimating the number of empty bins in the ``Balls and Bins'' problem (a companion of the famous ``Birthday Paradox'' and ``Coupon Collector Problem'' \citep{Flajolet2009}). Each continuous feature is a ``bin'', and each node in a root-to-leaf path in a tree is a ``ball''. Each time we randomly select an feature (i.e., throw a ball), it has an equal chance of being (i.e., landing in) any of the $s$ continuous features (i.e., bins).

\begin{figure}[t]
	\centering
	{\footnotesize \def\svgwidth{3.5in} 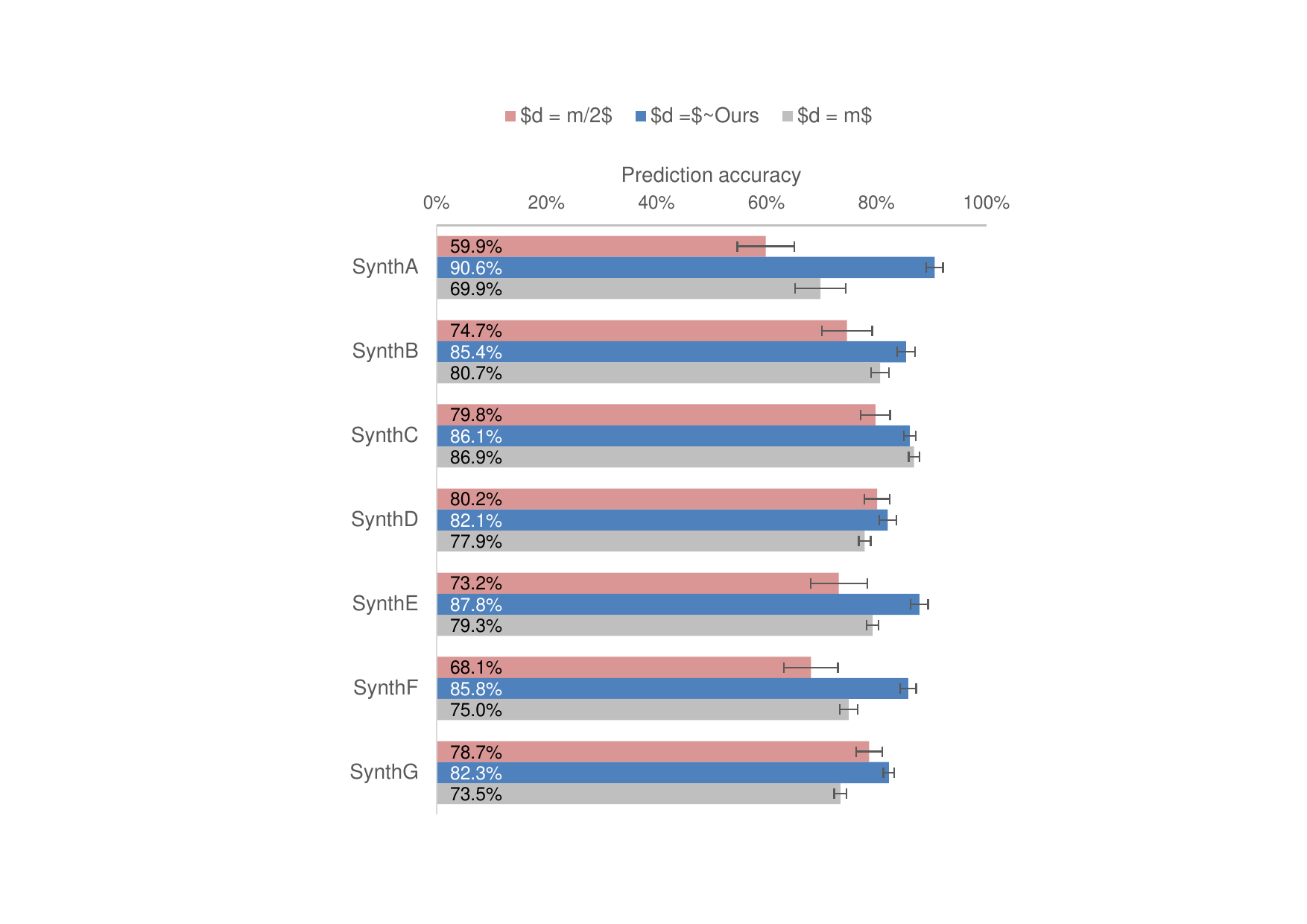}
	\caption{The average prediction accuracy of our proposed algorithm for three different tree depths: $m/2$; our proposed depth defined by Theorem~\ref{theorem_depth}; and $m$. The results are for when the budget is $\epsilon=1$.}
	\label{fig:depth_accuracy}
\end{figure}

Let the random variable $X$ equal the number of features never selected in a root-to-leaf path. For each feature $i$, $X_i$ equals 1 if $i$ is never selected, and 0 otherwise. Since $\mathbf{E}[X] = \mathbf{E}[\sum\limits_i^s X_i] = \sum\limits_i^s \mathbf{E}[X_i]$, we only need to know $\mathbf{E}[X_i]$ to know the expected number of features that are never selected. For any feature $i$, the probability that it will not be selected in a node is equal to the probability that any of the other $s-1$ features will be selected, which equals $\frac{s-1}{s}$. If we repeat this for all $d$ nodes, where each selection is independent of the others, the probability that $i$ is never selected is equal to $(\frac{s-1}{s})^d$. Thus $\mathbf{E}[X_i] = (\frac{s-1}{s})^d$, and $\mathbf{E}[X] = s(\frac{s-1}{s})^d$.
\end{proof}

If a dataset has $r$ discrete features in addition to the $s$ continuous features, we add $r/2$ to the depth $d$ defined by Theorem~\ref{theorem_depth} as per the combinatorial reasoning of \citet{Fan2003}. Table~\ref{tab:depths_examples} shows the tree depths for the datasets we use in our experiments, calculated using Theorem~\ref{theorem_depth}.

Fig.~\ref{fig:depth_accuracy} provides empirical evidence of the prediction accuracy gained when using our proposed depth, rather than simply $m/2$. We also present the prediction accuracy when a depth of $m$ is used, to demonstrate that increasing the depth arbitrarily does not necessarily increase the accuracy (it actually \emph{decreases} the accuracy for all but one dataset). Our proposed depth provides higher prediction accuracy than $m/2$ for all datasets, by at least 2 percentage points and as much as 30 percentage points. For most datasets, we also see less variance in the standard deviation when using our proposed depth over $m/2$, as expected from the analysis done by \citet{Fan2003}.

\subsection{Number of Trees}\label{subsec:numtrees}

Another factor in the implementation of a differentially private decision forest with random decision trees is the number of trees to build. Fig.~\ref{fig:num_trees_acc} and Fig.~\ref{fig:num_trees_flipped} present empirical results for a range of forest sizes (i.e., number of trees). Fig.~\ref{fig:num_trees_acc} shows the average prediction accuracy results for our seven synthetic datasets, for both $\epsilon=0.2$ and $\epsilon=1$.

\begin{figure}
	\centering
	\subfloat[$\epsilon=0.2$]{\footnotesize \def\svgwidth{2.44in} 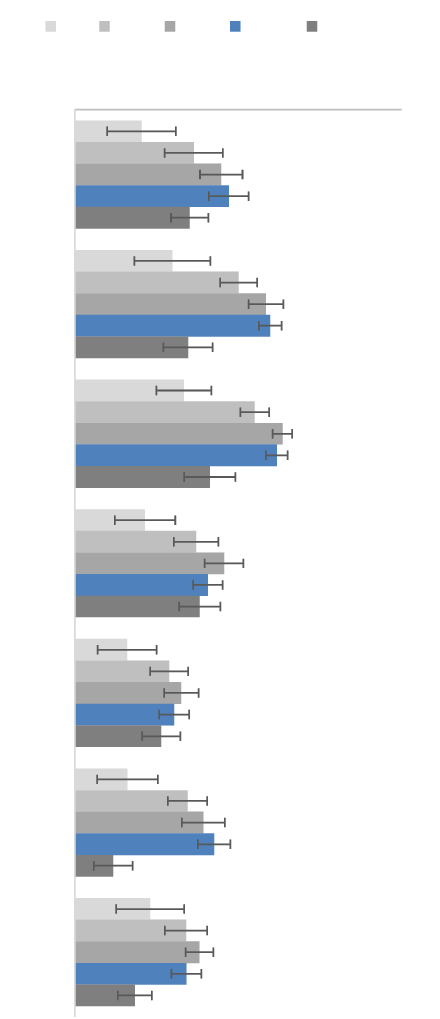\label{fig:num_trees_acc_a}}
	\subfloat[$\epsilon=1.0$]{\footnotesize \def\svgwidth{2.09in} 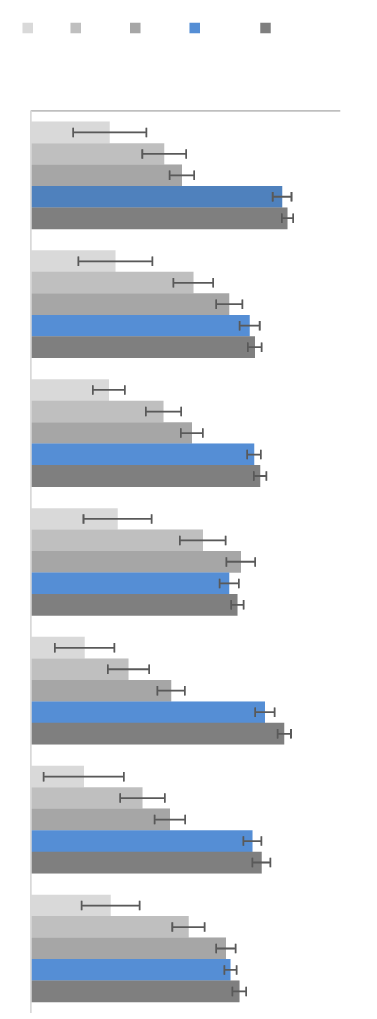\label{fig:num_trees_acc_b}}
	\caption{The average prediction accuracy of our proposed algorithm when building different numbers of trees, with privacy budget (a) $\epsilon=0.2$ and (b) $\epsilon=1.0$. Due to space constraints, the horizontal axis starts at 50\%. We recommend building 100 trees when using our algorithm, seen in blue.}
	\label{fig:num_trees_acc}
\end{figure}

\begin{figure}
	\centering
	{\footnotesize \def\svgwidth{4.0in} 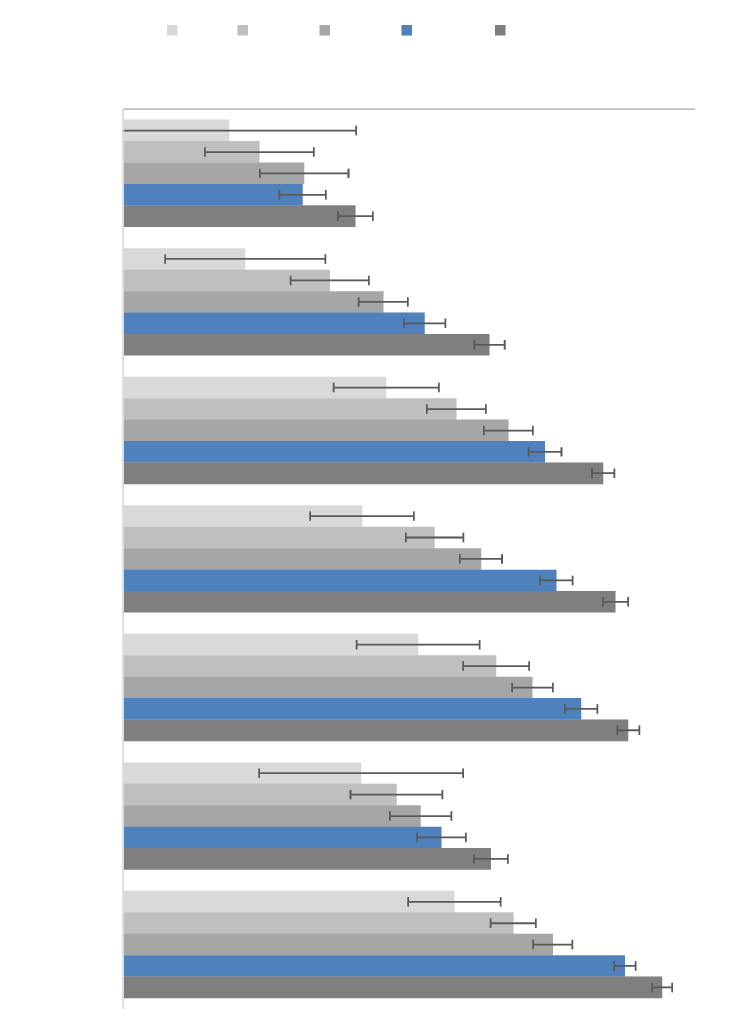}
	\caption{The percentage of non-empty leaf nodes that had their majority (i.e., most frequent) labels changed by the Exponential Mechanism, when building different numbers of trees with a budget of $\epsilon=1$.}
	\label{fig:num_trees_flipped}
\end{figure}

From these results, one observation is that having more trees is not necessarily better. Indeed, for $\epsilon=0.2$ especially, we can see that there appears to be a ``sweet spot'' at 30 to 100 trees where prediction accuracy is highest. As $\epsilon$ increases, this sweet spot increases to 100 to 300 trees. Another observation we can make is that for 1 to 10 trees, the prediction accuracy results vary by a lot more compared to when there are more trees, seen by the larger standard deviations. This is something we expect to see, given the high randomness in the construction of the trees. If we only build one random tree, there is a much higher chance of us getting very lucky or very unlucky when predicting future labels than there is if we build many random trees. With many trees, we can use the predictions made by each tree as votes and select the most voted class label as our prediction. The variance is also reduced due to the disjoint data used in each tree; using bootstrap (i.e. selected with replacement) samples is strongly advised for even non-private trees \citep{Breiman2001,Geurts2006}. While we cannot use sampling \emph{with} replacement in our algorithm due to the privacy costs, sampling \emph{without} replacement (which is what disjoint subsets achieves) reduces over-reliance on individual records, and thus variance, in the same way \citep{Breiman2001}. Having more trees also helps average out the noise caused by the Exponential Mechanism. Of course, at some point having more votes no longer provides a benefit, and in the case of our differentially private scenario there is the added downside of having to divide up the data into more disjoint subsets -- a problem that non-private decision forests are immune from, since they can sample with replacement.

Fig.~\ref{fig:num_trees_flipped} portrays a different perspective on the same scenario as Fig.~\ref{fig:num_trees_acc}. Fig.~\ref{fig:num_trees_flipped} tells us that larger numbers of trees cause more (non-empty) leaf nodes to output a label that differs from the actual most frequent label. In other words, more most frequent label outputs are ``flipped'' to an incorrect label due to the Exponential Mechanism. This is because the disjoint subsets of data are smaller when more trees are generated, which decreases the average size of $j$ in each of the leaf nodes, which decreases the probability of the Exponential Mechanism outputting the label with the highest score. Interestingly though, the proportion of incorrect predictions being caused to preserve privacy remains around 5 to 20\%, even with 30,000 records being split among 100 trees. This means that 80\% of the predictions are as accurate as possible, given the training data. With 100 trees, each unseen record has 100 votes on what its label is, and at least 80\% of the votes (from non-empty leaf nodes) are likely to be correct. Even if the unseen record falls into a lot of empty leaf nodes, 100 votes is a large enough sample that the random votes cast by empty leaf nodes will cancel each other out in most cases. These findings are supported by the prediction accuracy results seen in Fig.~\ref{fig:num_trees_acc}. This figure demonstrates that very good prediction accuracy is possible, with many datasets achieving 85\% to 90\%  accuracy in Fig.~\ref{fig:num_trees_acc_a} when using 100 trees. Given that this is under the strict conditions required to protect privacy, these results are very promising.

For all of the above reasons, and from further testing with additional forest sizes and epsilon values which agreed with the observations seen in Fig.~\ref{fig:num_trees_acc} and Fig.~\ref{fig:num_trees_flipped}, we recommend building 100 random trees with our algorithm.

\section{Additional Experiments}\label{sec:experiments}

Aside from the experiments included throughout Section~\ref{sec:forest}, we present some other results here. In Section~\ref{subsec:size} we demonstrate that with larger datasets, much smaller privacy budgets are viable, and that larger privacy budgets cause the classifier quality to become asymptotically close to a non-private classifier. In Section~\ref{subsec:opponents} we compare our algorithm to similar algorithms, introduced in Section~\ref{subsec:DP-RDTs}.

\subsection{Scaling with Dataset Size and Privacy Budget} \label{subsec:size}

To save on computation time we perform our experiments with small datasets, with 30,000 records in our synthetic datasets and 5,456 to 30,162 records in our real-world datasets. To compensate for this, we use a relatively large privacy budget of $\epsilon=1$ for most of our experiments. Fig.~\ref{fig:scaling_size} demonstrates why we can do this: one of the advantages of differential privacy is that it scales very well, adding \emph{less} noise the more data there is \citep{Dwork2014}. In other words, a sample of 3,000,000 records with $\epsilon=0.1$ can achieve comparable results to a sample of 30,000 records when $\epsilon=1.0$, as seen in Fig.~\ref{fig:scaling_size}. What privacy budget is actually acceptable for any particular scenario in the real world depends very much on the specifics of the scenario. For example, a large public project was able to use a privacy budget of $\epsilon=8.6$ \citep{Machanavajjhala2008}.

\begin{figure}[t]
	\centering
	{\footnotesize \def\svgwidth{3.5in} 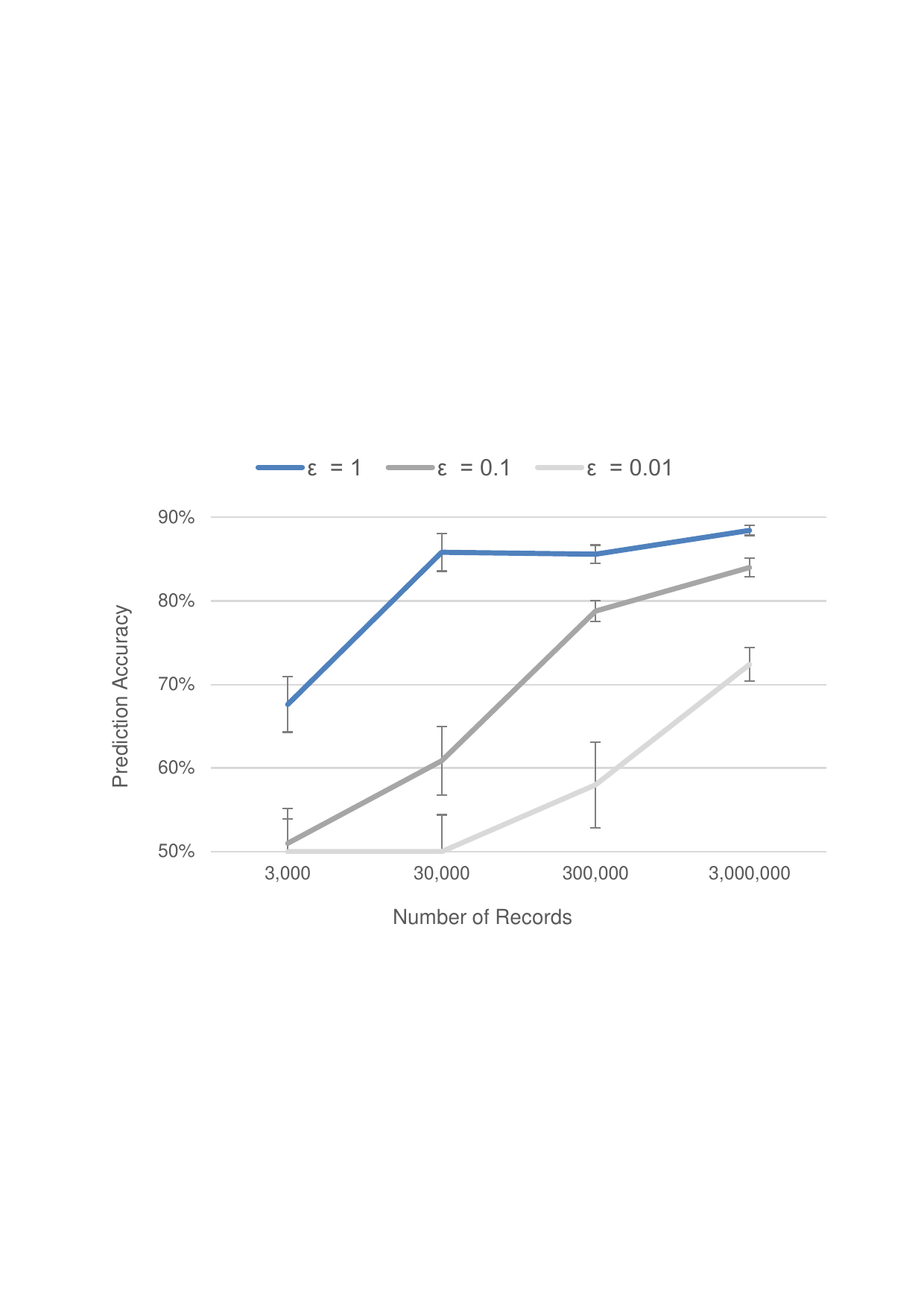}
	\caption{The average prediction accuracy of our proposed algorithm, when applied to dataset SynthF with different numbers of generated records $n$. The results of three privacy budgets are shown, $\epsilon=0.01,0.1,1.0$.}
	\label{fig:scaling_size}
\end{figure}

Differential privacy, and therefore our proposed technique, also scales well with larger privacy budgets, as seen in Fig.~\ref{fig:increasing_epsi}. Here we can see that as $\epsilon$ increases, our differentially-private technique gets asymptotically close to a non-private version of our technique (described in Section~\ref{subsec:methodology}). We include \citet{Breiman2001}'s Random Forest technique  to act as a reference point.

\begin{figure}[t]
	\centering
	{\footnotesize \def\svgwidth{4.0in} 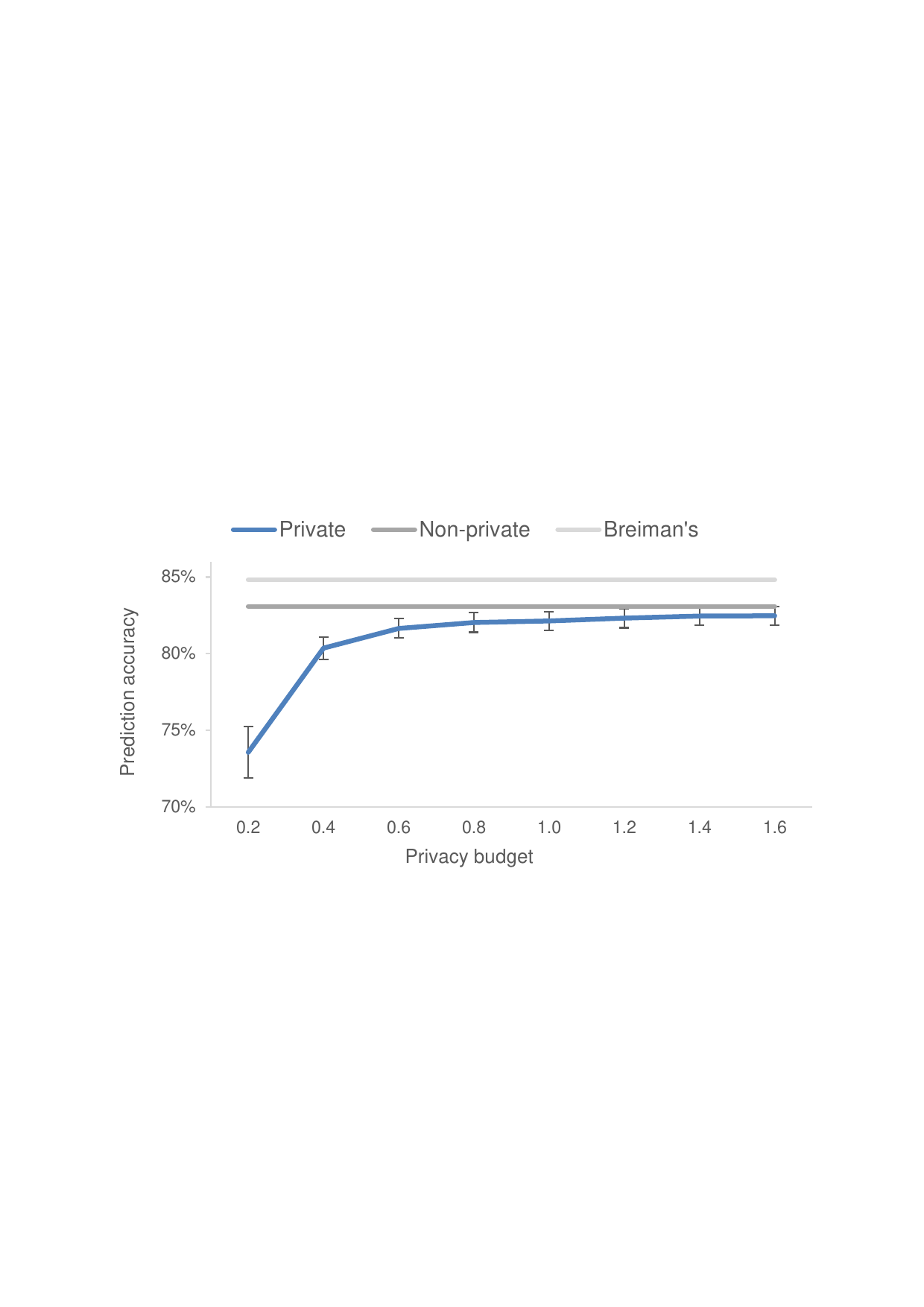}
	\caption{The average prediction accuracy of our private technique as $\epsilon$ increases for the Adult dataset, approaching the prediction accuracy of a non-private extremely random forest, with all the same parameters as our technique except that no noise is added to the most frequent labels. We also include the prediction accuracy of \citet{Breiman2001}'s Random Forest as context.}
	\label{fig:increasing_epsi}
\end{figure}

\subsection{Comparisons with Other Techniques} \label{subsec:opponents}

We implement the following differentially private decision tree algorithms, using all of their recommended parameters: \citet{Jagannathan2012}'s (henceforth called JPW); \citet{Friedman2010}'s (henceforth called FS); and \citet{Rana2016}'s (henceforth called RGV). Since RGV uses a weaker form of differential privacy, we first compare with the algorithms using the same definition as us: JPW and FS. JPW heuristically recommends a forest size of 10 trees \citep{Jagannathan2012}, while FS only builds one tree \citep{Friedman2010}. For tree depth, JPW use a depth of $d=\min(m/2,\log_bn-1)$ where $b$ is the average domain size of the features; FS uses a depth of $d=5$. We then run both of their techniques, as well as ours, on seven synthetic datasets and seven real-world datasets with $\epsilon=1$. We also run \citet{Breiman2001}'s Random Forest technique  on each of the datasets, to act as a non-private benchmark. While comparing to a technique that completely disregards privacy is obviously unfair, it provides context about the prediction accuracy that is possible under more optimal conditions for each dataset. The results are presented in Fig.~\ref{fig:vs_opponents}.

All of the results reported in this section are statistically significant. Using the Wilcoxon signed rank test (since we cannot assume that the results are normally distributed), we find that the differences between our technique and JPW, our technique and FS, and our technique and RGV are statistically significant for all datasets. The weakest significance is between our technique and FS for the GammaTele dataset, with a p-value of 0.000057.

\begin{figure}
	\centering
	{\footnotesize \def\svgwidth{4.4in} 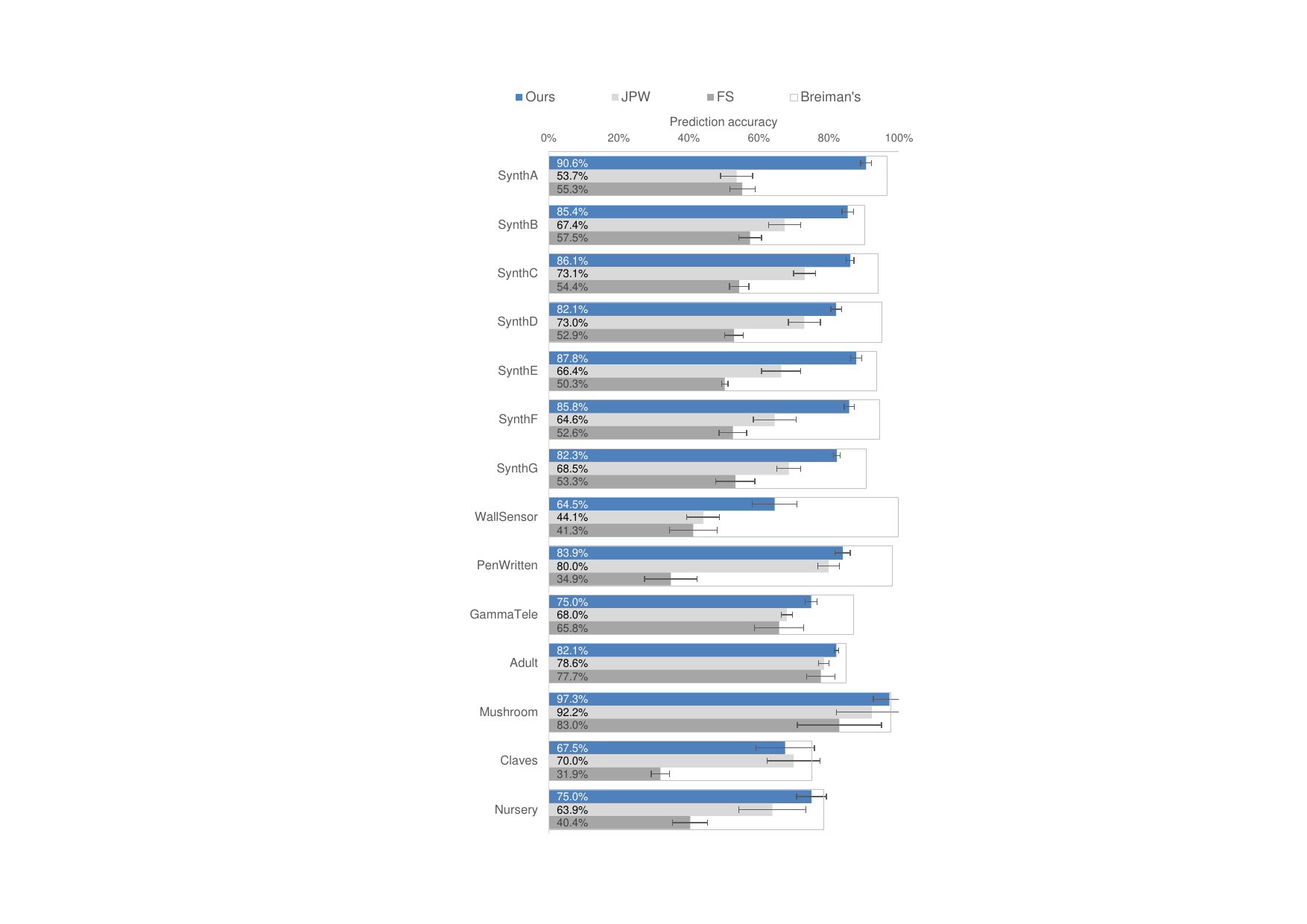}
	\caption{The average prediction accuracy of three differentially-private decision tree algorithms with $\epsilon=1$: our proposed technique, JPW \citep{Jagannathan2012}, and FS \citep{Friedman2010}. We also provide the prediction accuracy of \citet{Breiman2001}'s (non-private) Random Forest for context, portrayed as an outlined bar.}
	\label{fig:vs_opponents}
\end{figure}

The last three real-world datasets presented in Fig.~\ref{fig:vs_opponents} have discrete features only. Both of the other papers performed their experiments with only discrete features, and FS greatly prefers them due to continuous features being much more inefficient with the privacy budget \citep{Friedman2010}. Our only loss out of all datasets comes from one of these discrete datasets, Claves, where JPW achieves 2.5\% better prediction accuracy on average. 

We beat both techniques in all other cases, both when using discrete data and continuous data. While some improvements are minor (3.5\% in the case of the Adult dataset against JPW), others are very large; more than 35\% against both JPW and FS for SynthA. We often achieve more than a 10\% improvement over JPW for other datasets, and 30\% over FS.

Of course, the process of adding noise to query outputs to preserve privacy means that performing better than a technique that makes no effort to preserve privacy is all but impossible. We can see this in Fig.~\ref{fig:vs_opponents}, where \citet{Breiman2001}'s Random Forest technique  always achieves betters prediction accuracy. However, our technique performs almost as well for some synthetic datasets, as well as Adult, Mushroom and Nursery.

\begin{figure}[t]
	\centering
	{\footnotesize \def\svgwidth{4.2in} 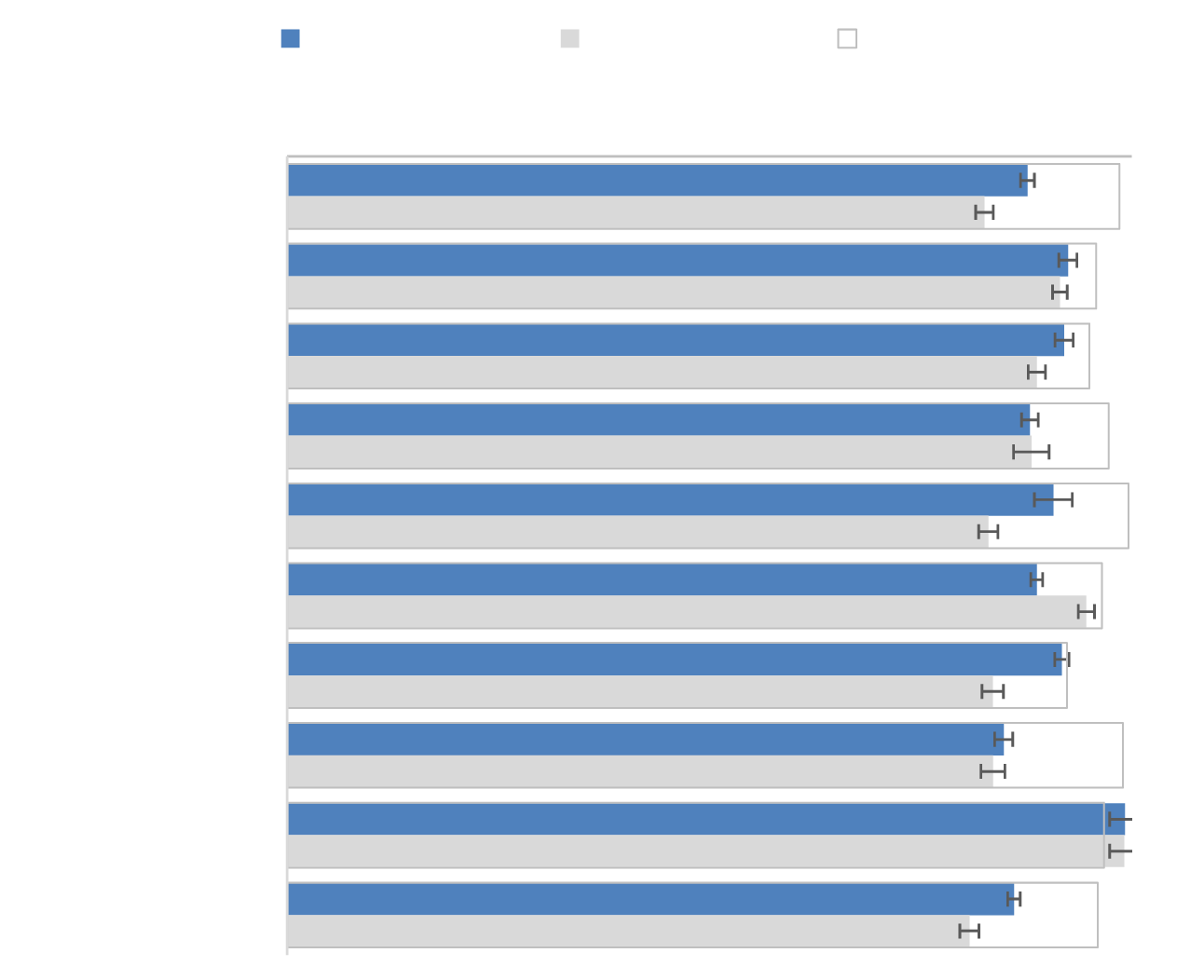}
	\caption{The average area under the ROC curve of our algorithm, compared to \citet{Jagannathan2012}'s private JPW and \citet{Breiman2001}'s non-private Random Forest, when $\epsilon=1$.}
	\label{fig:AUC}
\end{figure}

Rather than relying solely on prediction accuracy to measure the effectiveness of our algorithm, Fig.~\ref{fig:AUC} and Fig.~\ref{fig:F1} present results for AUC (Area Under the ROC Curve) \citep{Hanley1982} and F1 score (sometimes called F-measure) \citep{Rijsbergen1979} respectively. AUC and F1 score work best on datasets with binary labels, so we limit the figures to just those datasets. When calculating the true and false predictions for the positive and negative labels, we consider the least frequent class label to be the positive case. Fig.~\ref{fig:AUC} and Fig.~\ref{fig:F1} show that our algorithm out-performs JPW in 11 of the 20 results, ties in eight results (if we consider being within one standard deviation of each other a tie) and only loses in one result. This supports the findings of Fig.~\ref{fig:vs_opponents}, where JPW only beat us once, and by less than one standard deviation.

\begin{figure}[t]
	\centering
	{\footnotesize \def\svgwidth{4.2in} 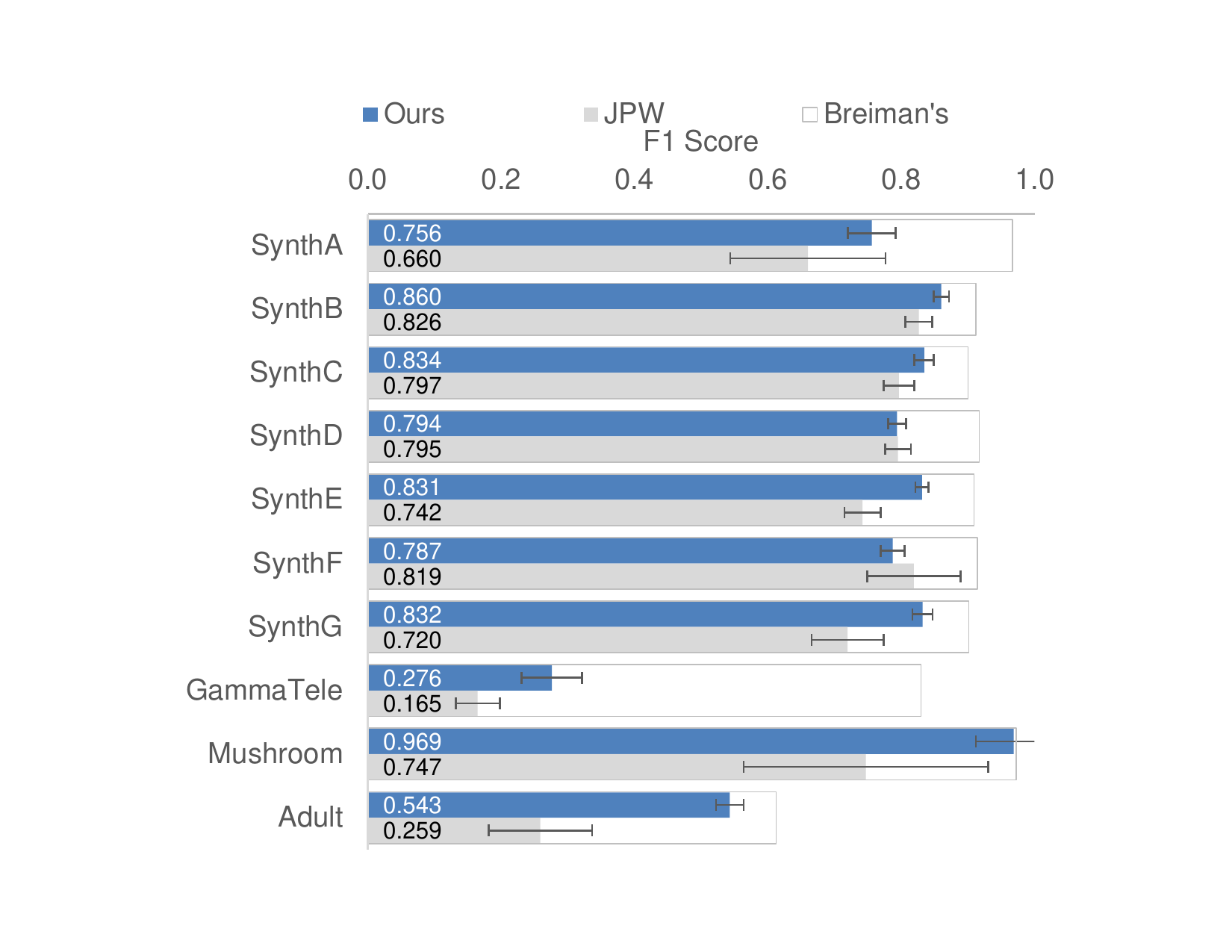}
	\caption{The average F1 score of our algorithm, compared to \citet{Jagannathan2012}'s private JPW and \citet{Breiman2001}'s non-private Random Forest, when $\epsilon=1$.}
	\label{fig:F1}
\end{figure}

RGV is a little different; they propose a weakened definition of differential privacy, and prove that a large ensemble of trees\footnote{Each tree in RGV is built similarly to FS, with some modifications.} can be made weakly differentially private with high utility. Put briefly, their differential privacy definition protects individuals from their feature values being leaked, but not from their presence in the dataset being leaked. RGV uses the entire privacy budget in each tree, and determines how many trees can be built based on a series of formulas provided in their paper \citep{Rana2016}. We run their technique with $\epsilon=1$ on the seven synthetic datasets and three real-world datasets (their paper only provides formulas for datasets with binary classes). The number of trees ranges from 110 to 1385 trees, depending on the dataset. We compare our strictly differentially private algorithm to their weakly differentially private algorithm in Fig.~\ref{fig:vs_rana}.

\begin{figure}[t]
	\centering
	{\footnotesize \def\svgwidth{3.4in} 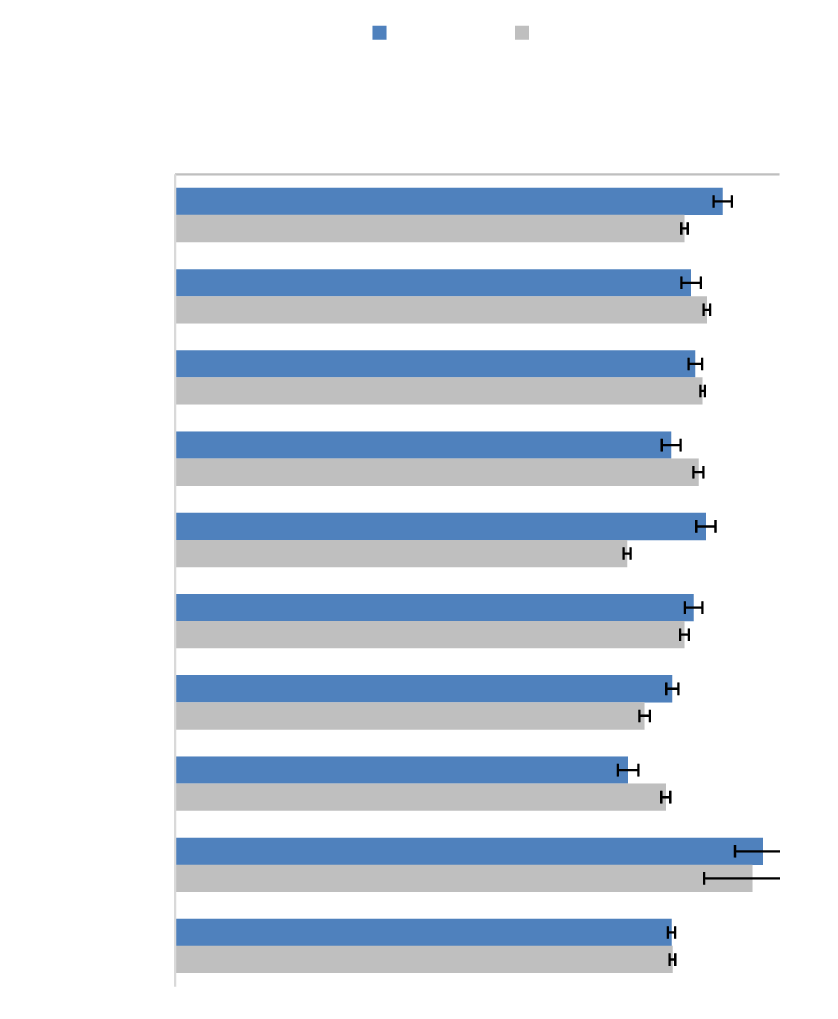}
	\caption{The average prediction accuracy of our technique compared to RGV \citep{Rana2016} when $\epsilon=1$. Note that RGV uses a weaker definition of differential privacy.}
	\label{fig:vs_rana}
\end{figure}

Even when competing against a less private algorithm, we see in Fig.~\ref{fig:vs_rana} that our algorithm still performs well. In fact we beat RGV in five cases, lose in four cases, and tie for the Adult dataset. Our accuracy improvements range from 1.5\% to 13.0\%, and RGV's accuracy improvements range from 1.2\% to 6.2\%. This result is quite impressive given the unfair comparison, and challenges the notion that strict, $(\epsilon,0)$-differential privacy \citep{Dwork2014} is too strict for high-quality data mining to be possible \citep{Rana2016,Hu2015}.

\section{Conclusion}\label{sec:discussion}

This paper proposes and explores a new method for differentially privately outputting the most (or least) frequent item in a set, using smooth sensitivity. We apply these findings to a random decision forest framework and achieve substantially higher accuracy than the state-of-the-art \citep{Friedman2010,Jagannathan2012,Fletcher2015b,Fletcher2015c,Rana2016}. We also extend the work done by \citet{Fan2003} to calculate the optimal depth for a random decision tree when using continuous features.

Each subsection in Section~\ref{sec:forest} explores a component of our algorithm, leading to several novel conclusions that improve the utility of differentially private random decision forests. Each theoretical conclusion is coupled with empirical results, demonstrating the benefits of the theory when put into practice. Finally in Section~\ref{sec:experiments}, combining all the improvements made by the previous subsections, we see that differentially private decision trees can be made from real-world data to create a highly accurate classifier, while simultaneously protecting the privacy of every person in the data. We also demonstrated in Section~\ref{subsec:opponents} that weakening the definition of differential privacy is not necessary in order to achieve good utility.

Section~\ref{subsec:size} demonstrates that with enough data, even very small privacy budgets are enough to make an accurate classifier. A user would not even have to use their entire budget on just our classifier, but can instead ask many other queries in addition to outputting our efficient classifier. On the other end of the spectrum, given enough privacy budget, a useful classifier can be built from very little data. Fig.~\ref{fig:scaling_size} demonstrates that with a budget of $\epsilon=1$, 30,000 records is all that is needed to make a classifier with over 85\% predictive accuracy.

Our findings in Section~\ref{subsec:majority} are quite generalizable; now that it has been proven that the smooth sensitivity of queries that output the most (or least) frequent item in a set is $e^{-j\epsilon}$, future research (of both ourselves and others) can explore constructing other applications with similar queries. Many machine learning domains can take advantage of queries that output the most frequent item, such as frequent pattern mining \citep{Bhaskar2010}.

We hope that our findings on smooth sensitivity, disjoint data and tree depth with continuous features aid researchers in their future work, and that our high-accuracy classifier aids data scientists in building differentially private applications.


~

\textbf{Sam Fletcher} is currently completing his Ph.D. at the School of Computing and Mathematics, Charles Sturt University, Australia. His research interests focus on decision trees, differential privacy, and the utility trade-offs required to ensure privacy. His biography is available at \emph{http://samfletcher.work}.

~

\textbf{Md Zahidul Islam} is an Associate Professor in Computer Science in the School of Computing and Mathematics, Charles Sturt University, Australia. His main research interests include data preprocessing and cleansing, various data mining algorithms, applications of data mining techniques, and privacy issues related to data mining. Find his biography at \emph{http://csusap.csu.edu.au/{\raise.17ex\hbox{$\scriptstyle\mathtt{\sim}$}}zislam/}. 

\newpage

\begin{centering}
\part*{\centering\Large Differentially-Private Random Forests using the Report One-Sided Noisy Arg-Max Mechanism}
\subsection*{\large Sam Fletcher}
\subsection*{Addendum, August 2021}
\end{centering}

\setcounter{section}{0}
\section{Introduction}
This working paper is an addendum to Fletcher and Islam's 2017 paper, ``Differentially Private Random Decision Forests using Smooth Sensitivity'' \cite{aFletcher2017}. In that paper, random decision trees are generated and the majority class labels in each leaf are reported using the Exponential mechanism \cite{aMcSherry2007}. Smooth sensitivity \cite{aNissim2007a} is applied to the Exponential mechanism in order to reduce the amount of noise required.

In August 2021, during Mete Ismayil's independent implementation of the original algorithm in IBM's Diffprivlib library\footnote{\url{https://github.com/IBM/differential-privacy-library}}, Naoise Holohan raised\footnote{\url{https://github.com/IBM/differential-privacy-library/pull/41\#pullrequestreview-731658503}} the following concern:

\begin{quote}
    I'm not sure if they have implemented smooth sensitivity as it should be. You typically can't use a smooth sensitivity bound as you would a global sensitivity bound. For example, while you can use Laplace noise with global sensitivity to achieve (pure) differential privacy, smooth sensitivity requires Cauchy noise to satisfy pure differential privacy. I'm not aware of a proof showing smooth sensitivity being used as a replacement for global sensitivity in the exponential mechanism. (In fact, it seems I can construct examples that break differential privacy when using a set-up of the exponential mechanism proposed in this paper).
\end{quote}

Upon further investigation, this criticism appears to be valid: applying smooth sensitivity to the Exponential mechanism has not yet been proven. The preliminary experiments mentioned in the quote above indicate that the true privacy cost $\epsilon$ may be twice the number reported in the original paper. While doubling the privacy cost may be acceptable sometimes, the lack of a formal proof makes the implementation questionable.

Section 2 proposes a solution to the above issue, removing the need for smooth sensitivity.

\section{Report One-Sided Noisy Arg-Max}

Report One-Sided Noisy Arg-Max (ROSNAM) is a special case of the Exponential mechanism, described by Dwork and Roth below \cite{aDwork2014}:

\begin{quote}
    \textbf{Example 3.6} (Best of Two). Consider the simple question of determining which of exactly two medical conditions $A$ and $B$ is more common. Let the two true counts be 0 for condition $A$ and $f_c > 0$ for condition $B$. Our notion of utility will be tied to the actual counts, so that conditions with bigger counts have higher utility and $\Delta u = 1$. Thus, the utility of $A$ is 0 and the utility of $B$ is $f_c$. Using the Exponential Mechanism we can immediately apply Corollary 3.12 to see that the probability of observing (wrong) outcome $A$ is at most $2e^{-f_c(\epsilon/(2\Delta u))} = 2e^{-f_c\epsilon/2}$.
   
    ...
   
    Consider the \textit{Report One-Sided Noisy Arg-Max} mechanism, which adds noise to the utility of each potential output drawn from the one-sided exponential distribution with parameter $\epsilon/\Delta u$ in the case of a monotonic utility, or parameter $\epsilon/2\Delta u$ for the case of a non-monotonic utility, and reports the resulting arg-max.
\end{quote}

This fits our scenario: we wish to report the class with the maximum count, where the addition of a record can only ever increase the utility (i.e. the counts are monotonic).

This means that rather than using the piece-wise utility function described in Theorem 1 of the original paper \cite{aFletcher2017}, the utility of a given class $c$ can be defined as $u = f_c$, where $f_c$ is the frequency of the class count.

In turn, rather than the sensitivity of the utility function being $e^{-j\epsilon}$ (where $j$ is the difference between the most frequent and second-most frequent class counts) \cite{aFletcher2017}, the sensitivity becomes $\Delta u = 1$ since the utility $u$ is a simple count.

As a simple example, if we have two classes A and B with frequencies $f_A=5$ and $f_B=10$, and a privacy cost of $\epsilon=0.1$, the resulting probability of outputting each class was previously:
\begin{equation}
    Pr(A) \propto \exp(0) = 1
\end{equation}
\begin{equation}
    Pr(B) \propto \exp(\frac{0.1}{2\times0.36}) = 1.14
\end{equation}
and with the updated equation, becomes:
\begin{equation}
    Pr(A) \propto \exp(0.1\times5) = 1.6
\end{equation}
\begin{equation}
    Pr(B) \propto \exp(0.1\times10) = 2.7 .
\end{equation}

With this change, the remainder of the original paper's algorithm is unaffected. Table~\ref{tab:changes} summarizes the necessary changes to the original paper. New empirical results will be required, but as can be seen in Table~\ref{tab:examples}, we can expect the results to improve.

\begin{table}[t]
    \centering
    \renewcommand{\arraystretch}{1.4}
    \setlength{\tabcolsep}{11pt}
    \begin{tabular}{c|c|c}
        \textbf{Parameter} & \textbf{Original Formula} & \textbf{Updated Formula} \\
        \hline
        $u$ & 1 if $f_c = \max_c f_c$ else 0 & $f_c$ \\
        $\Delta u$ & $e^{-j\epsilon}$ & 1 \\
        $\exp\left(\frac{\epsilon u}{2\Delta u}\right)$ & $\exp\left(\frac{\epsilon (1~or~0)}{2e^{-j\epsilon}}\right)$ & $\exp(\epsilon f_c)$\\
    \end{tabular}
    \caption{The necessary changes to the original paper. The third row combines the changes in the first two rows to show the new Exponential mechanism calculation.}
    \label{tab:changes}
\end{table}

\begin{table}[t]
    \centering
    \renewcommand{\arraystretch}{1.4}
    \setlength{\tabcolsep}{11pt}
    \begin{tabular}{c c|c c}
        \textbf{$f_A$} & \textbf{$f_B$} & \textbf{Original $Pr(B)$} & \textbf{New $Pr(B)$} \\
        \hline
        5 & 10 & 52.1\% & \textbf{62.2\%} \\
        105 & 110 & 52.1\% & \textbf{62.2\%} \\
        \hline
        0 & 1 & 51.4\% & \textbf{52.5\%} \\
        0 & 10 & 53.4\% & \textbf{73.1\%} \\
        \hline
        10 & 50 & 93.8\% & \textbf{98.2\%} \\
        10 & 60 & \textbf{99.9\%} & 99.3\% \\
    \end{tabular}
    \caption{Examples of the change in probability of outputting the majority class label.}
    \label{tab:examples}
\end{table}


\end{document}